\def\be{\begin{equation}}
\def\ee{\end{equation}}
\def\bc{\begin{center}}
\def\ec{\end{center}}
 \newtheorem{thm}{Theorem}[section]
 \newtheorem{prop}[thm]{Proposition}
 \theoremstyle{definition}
 \theoremstyle{remark}
 \newtheorem{rem}[thm]{Remark}
 \newtheorem*{ex}{Example}
 \numberwithin{equation}{section}
\def\be{\begin{equation}}
\def\ee{\end{equation}}
\def\bc{\begin{center}}
\def\ec{\end{center}}
\begin{document}

\title{Multi-species mean-field spin-glasses.\\
 Rigorous results}

\author{Adriano Barra\footnote{Dipartimento di Fisica, Sapienza Universit\`a di Roma}, \ Pierluigi Contucci \footnote{Dipartimento di Matematica, Universit\`a di Bologna}, \ Emanuele Mingione \footnote{Dipartimento di Matematica, Universit\`a di Bologna} \ and Daniele Tantari \footnote{Dipartimento di Matematica, Sapienza Universit\`a di Roma}}

\maketitle


\begin{abstract}
We study a multi-species spin glass system where the density of each species is kept fixed at increasing
volumes. The model reduces to the Sherrington-Kirkpatrick one for the single species case. The existence
of the thermodynamic limit is proved for all densities values under a convexity condition on the interaction.
The thermodynamic properties of the model are investigated and the annealed, the replica symmetric and the replica symmetry breaking bounds are proved using Guerra's scheme.
The annealed approximation is proved to be exact
under a high temperature condition. We show that the replica symmetric solution has negative entropy at low temperatures. We study the properties of a suitably defined replica symmetry breaking solution and we optimise it within a novel {\em ziggurat ansatz}. The generalised order parameter is described by a Parisi-like partial differential equation.
\end{abstract}

{\em Keywords:}Multi-species spin glasses, annealed region, replica symmetric solution, replica symmetry
breaking bounds.

\maketitle

\section{Introduction and main results}

In this paper we introduce and study a multi-species mean field spin glass model\footnote{see also other works on extensions to multipartite of standard mean-field spin glasses \cite{BGG2,Bovier,talagrand1,talagrand2} or ferromagnets \cite{cg,fc,fu}.}, i.e. a system composed by spins belonging to a finite number of
different species, where spin couples interact through a centered gaussian variable whose variance
depends only on the species they belong to.
\newline
We rigorously prove bounds for the model pressure
and control their properties in a region of convexity defined in terms of the variance matrix
of the interactions, namely when the interactions within a group dominate the inter-groups interactions. In particular we prove the existence of the thermodynamic limit of the model, the correctness of the annealed solution in the high temperature limit and we perform an extension of the broken replica symmetry bounds developed by Guerra for the SK model \cite{broken} suitably adapted to this multi-species extension. In particular we formulate  a novel {\em ziggurat} ansatz for the broken replica phase which acts as a rigorous upper bound for the free energy.
Before starting to explore the model, it may be worth stressing that, while we introduced the model and developed the Guerra's bound for the free energy, Panchenko proved also the reverse bound \cite{panchenkomulti} hence the two papers as a whole confirm the correctness of our ansatz.

%
More specifically, the system is composed by a finite number $S$ of species indexed by $s\in\mathcal{S}$. For each $s\in\mathcal{S}$ consider the set $\Lambda^{(s)}_N\subset\mathbb{Z}$ such that
\begin{gather}
\label{par1}
\Lambda^{(s)}_N\bigcap\Lambda^{(p)}_N=\emptyset,\ \ \ \forall s\neq p\in\mathcal{S}\\
\label{par2}
|\Lambda^{(s)}_N|=N^{(s)},\\
\label{par3}
N=\sum_{s\in\mathcal{S}}N^{(s)}.
\end{gather}
\noindent
Consider a disordered model defined by a collection $(\sigma^{(s)})_{s\in\mathcal{S}}$ of Ising variables, meaning that $\sigma^{(s)}_i=\pm 1$ for each $\forall s\in\mathcal{S}, i\in \Lambda^{(s)}_N$. We denote by $\Sigma_{N}$ the family of possible spin configurations $\sigma=\{\sigma_i^{(s)}\}_{s\in\mathcal{S},i\in \Lambda^{(s)}_N}$, then  we have that $\mid\Sigma_{N}\mid=2^N$. In the sequel the following standard definitions of spin glass theory will be used.\smallskip

For every $N\in \mathbb{N}$, let $\{H_{N}(\sigma)\}_{\sigma\in\Sigma_{N}}$  be a family of $2^N$  gaussian r.v. which represent the Hamiltonian of the system, defined by
\be\label{hami}
H_{N}(\sigma):=-\frac{1}{\sqrt{N}}\sum_{s,p\in\mathcal{S}} \sum_{i\in\Lambda^{(s)}_N}\sum_{j\in\Lambda^{(p)}_N} J^{(sp)}_{ij}\sigma^{(s)}_i\sigma^{(p)}_j,
\ee
where the $J$'s are \textsl{Gaussian i.i.d.} r.v. such that for every $s,p,i,j$ we have that
\be\label{mean}
\mathbb{E}(J^{(sp)}_{ij})=0,
\ee
and
\be\label{varia}
\mathbb{E}(J^{(sp)}_{ij}J^{(s'p')}_{i'j'})=\delta_{ss'}\delta_{pp'}\delta_{ii'}\delta_{jj'}\Delta^2_{sp},
\ee
with $\Delta^2_{sp}=\Delta^2_{ps}$.\smallskip

The covariance matrix of the system is
\be\label{Covamodel}
C_N(\sigma,\tau):=\mathbb{E}(H_{N}(\sigma)H_{N}(\tau))
\ee
and, thanks to (\ref{mean}) and(\ref{varia}), a simple computation shows that
\be\label{covamodel2}
C_N(\sigma,\tau)=\frac{1}{N}\sum_{s,p\in\mathcal{S}}\Delta^2_{sp} \Big(\sum_{i\in\Lambda^{(s)}_N}\sigma^{(s)}_i\tau^{(s)}_i\Big)  \Big(\sum_{j\in\Lambda^{(p)}_N}\sigma^{(p)}_j\tau^{(p)}_j\Big).
\ee
To show explicitly the dependence trough the choice of the various sizes we can define for every $s\in\mathcal{S}$ the relative density
\be\label{reles}
\alpha^{(s)}_N:=\frac{N^{(s)}}{N},
\ee
and the relative overlap
\be\label{relover}
q^{(s)}_N(\sigma,\tau)=\frac{1}{N^{(s)}}\sum_{i\in\Lambda^{(s)}_N}\sigma^{(s)}_i\tau^{(s)}_i,
\ee
then the covariance matrix can be write in the form
\be\label{cova2}
C_N(\sigma,\tau)=N\sum_{s\in\mathcal{S}}\sum_{p\in\mathcal{S}}\Delta^2_{sp}\alpha_N^{(s)}\alpha^{(p)}_N q^{(s)}_N(\sigma,\tau) q^{(p)}_N(\sigma,\tau).
\ee
We can introduce in a natural way a vector notation ( see Appendix \ref{notation} for a detailed explanation), and rewrite (\ref{cova2}) as
\be\label{quad}
C_N(\sigma,\tau)=N\Big(\mathbf{q}_N,\mathbf{\Delta} \mathbf{q}_N\Big),
\ee
where
\be\label{quadvec}
\mathbf{q}_N=\Big(q^{(s)}_N(\sigma,\tau)\Big)_{s\in\mathcal{S}}
\ee
\begin{ex}
For example, in the case of two species, namely $\mathcal{S}=\{a,b\}$, the covariance matrix  $\mathbf{q}_N$ is a 2-dimensional vector and $\mathbf{\Delta}$ is a $2\times2$ matrix defined by the entries

\[\begin{bmatrix}
\alpha_N^{(a)}\alpha_N^{(a)}\Delta^2_{aa}&\alpha_N^{(a)}\alpha_N^{(b)}\Delta^2_{ab}\\
\alpha_N^{(a)}\alpha_N^{(b)}\Delta^2_{ab}&\alpha_N^{(b)}\alpha_N^{(b)}\Delta^2_{bb}\\
\end{bmatrix}\]
\end{ex}
\medskip

The random partition function extends standard partition function for disordered systems and reads off as
\be\label{partition}
Z_N:=\sum_{\sigma}a_N(\sigma,\mathbf{h})e^{- H_{N}(\sigma)},
\ee
where
\be\label{weights}
a_N(\sigma,\mathbf{h}):=\exp\Big(\sum_{s\in\mathcal{A}}h^{(s)} \sum_{i\in\Lambda^{(s)}_N}\sigma^{(s)}\Big),
\ee
and $\mathbf{h}:=(h^{(s)})_{s\in\mathcal{A}}$ is a vector which represents an external (magnetic in the physical literature) field acting in each party separately.\smallskip

\begin{rem} Notice that, to lighten the notation, in the \textsl{l.h.s.} of (\ref{partition}), and in the rest of the paper, we do not  write explicitly the dependence on $\mathbf{h}$. With the same aim, the physical inverse temperature $\beta$, which appears in the standard definition of the partition function, in our case is set equal to $1$ with no loss of generality as it can be recovered in every moment simply by properly rescaling the interactions parameters.
\end{rem}
\noindent
\medskip

The main thermodynamic observable is the quenched  pressure density, defined as
\be\label{qpres}
p_N:=\frac{1}{N}P_N:=\frac{1}{N}\mathbb{E}\log Z_N
\ee
It's useful also to introduce a basic object in spin glass theory, the so-called quenched Gibbs measure.
\smallskip

We consider $n$ copies of the configuration space, denoted by  $\sigma^{1},\ldots, \sigma^{n}$ and,
for every bounded function $f:(\sigma^{1},\ldots, \sigma^{n})\rightarrow \mathbb{R}$, we call the random $n$-Gibbs state the following r.v.

\be\label{rgistate}
\Omega_{N}(f):=\sum_{\sigma^{1},\ldots, \sigma^{n}}
f(\sigma^{1},\ldots, \sigma^{n})G_{N}(\sigma^{1})\ldots G_{N}(\sigma^{n})
\ee
where
\be\label{gimeas}
G_{N}(\sigma):=\frac{a_N(\sigma,\mathbf{h})e^{-H_{N}(\sigma)}}{Z_N}
\ee
is called Gibbs measure.
\smallskip
Lastly we define the quenched Gibbs measure as
\be\label{qeqstate}
\langle f\rangle_N:= \mathbb{E}\Omega_{N}(f).
\ee

First at all we prove, under a suitable condition, the existence of the thermodynamic limit for the quenched pressure density (\ref{qpres}) when the species densities are kept constant, i.e. the limit of ${N \to  \infty}$ is defined such that $\forall s\in\mathcal{S}$  the quantity $\alpha^{(s)}_N=\frac{N^{(s)}}{N}=\alpha^{(s)}$ is independent of $N$. More precisely, we prove the following
\begin{thm}\label{therlimite}
If the matrix $\mathbf{\Delta}$ is positive semi-definite, then
$$\lim_{N \to  \infty}p_N=\sup_N p_N,$$
where the limit is taken at fixed densities.
\end{thm}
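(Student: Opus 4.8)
The plan is to use Guerra's celebrated interpolation argument, adapted to the multi-species setting. I would show that the sequence $P_N$ is \emph{superadditive} in the sense that, for $N = N_1 + N_2$ with matching species proportions (so that $N^{(s)} = N_1^{(s)} + N_2^{(s)}$ and all three systems share the same densities $\alpha^{(s)}$), one has $P_N \ge P_{N_1} + P_{N_2}$; the existence of the limit and its identification with $\sup_N p_N$ then follows from Fekete's lemma along the subsequence of admissible volumes, and a standard argument handles general $N$. To prove superadditivity I would split the index set $\Lambda^{(s)}_N$ of each species into two blocks $\Lambda^{(s)}_{N_1}$ and $\Lambda^{(s)}_{N_2}$ and introduce the interpolating Hamiltonian
\be
H_N(t,\sigma) := \sqrt{t}\, H_N(\sigma) + \sqrt{1-t}\,\big(H_{N_1}(\sigma^{(1)}) + H_{N_2}(\sigma^{(2)})\big),
\ee
for $t \in [0,1]$, where $\sigma^{(1)}, \sigma^{(2)}$ denote the restrictions of $\sigma$ to the two blocks and $H_{N_1}, H_{N_2}$ are independent copies of the model Hamiltonian on the sub-volumes, also independent of $H_N$. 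Writing $\varphi(t) := \mathbb{E}\log \sum_\sigma a_N(\sigma,\mathbf{h}) e^{-H_N(t,\sigma)}$, one has $\varphi(1) = P_N$ and $\varphi(0) = P_{N_1} + P_{N_2}$ (the external-field weights factorize over blocks), so it suffices to show $\varphi'(t) \le 0$.

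Computing $\varphi'(t)$ via Gaussian integration by parts produces, as usual, a quadratic form in the overlaps evaluated at two replicas. Concretely, denoting by $\tilde{\Omega}_t$ the Gibbs state associated to $H_N(t,\cdot)$ and by $q^{(s)}_{12}$ (resp.\ the block overlaps $q^{(s),1}_{12}, q^{(s),2}_{12}$ on $\Lambda^{(s)}_{N_1}, \Lambda^{(s)}_{N_2}$) the relevant overlaps of the two replicas, the derivative is
\be
\varphi'(t) = -\frac12 \Big(\mathbb{E}\,\tilde{\Omega}_t\big[\, C_N(\sigma,\sigma) - C_{N_1}(\sigma^{(1)},\sigma^{(1)}) - C_{N_2}(\sigma^{(2)},\sigma^{(2)})\,\big] - \mathbb{E}\,\tilde{\Omega}_t\big[\, C_N(\sigma,\tau) - C_{N_1}(\sigma^{(1)},\tau^{(1)}) - C_{N_2}(\sigma^{(2)},\tau^{(2)})\,\big]\Big),
\ee
where $\tau$ is an independent replica. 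The diagonal terms cancel (up to an $O(1)$ boundary term that is negligible after dividing by $N$, or exactly, depending on how carefully one sets up the block sizes), while the off-diagonal term, using the representation \eqref{quad} of the covariance as $N(\mathbf{q},\mathbf{\Delta}\mathbf{q})$ together with the elementary identity $\alpha^{(s)}_N q^{(s)}_N = \alpha^{(s)}_{N_1}(N_1/N) q^{(s),1}_{N_1} + \alpha^{(s)}_{N_2}(N_2/N) q^{(s),2}_{N_2}$, reduces — after expanding and using convexity of the quadratic form $\mathbf{x} \mapsto (\mathbf{x}, \mathbf{\Delta}\mathbf{x})$ — to a nonnegative quantity, precisely because $\mathbf{\Delta}$ is positive semi-definite. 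This is exactly where the hypothesis of the theorem enters: the cross term $N(\mathbf{q}^1, \mathbf{\Delta}\mathbf{q}^2)$ gets dominated by the convex combination of the pure block terms, and the remainder is a sum of expectations of $(\mathbf{q}^1 - \mathbf{q}^2, \mathbf{\Delta}(\mathbf{q}^1 - \mathbf{q}^2))$-type quantities, each $\ge 0$ under $\tilde{\Omega}_t$ since $\mathbf{\Delta} \succeq 0$ pointwise in the overlap realizations.

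The main obstacle I anticipate is bookkeeping rather than conceptual: one must handle carefully the interplay between the global densities $\alpha^{(s)}$ and the sub-volume densities $\alpha^{(s)}_{N_1}, \alpha^{(s)}_{N_2}$, which in general cannot be kept \emph{exactly} equal to $\alpha^{(s)}$ for all $N_1, N_2$ because of integrality constraints on $N^{(s)}_i$. The standard fix is to first prove superadditivity along a carefully chosen subsequence where the densities match exactly, obtain the limit there, and then control the discrepancy $|\alpha^{(s)}_N - \alpha^{(s)}| = O(1/N)$ by a Lipschitz/interpolation estimate on $p_N$ in the density parameters, showing that changing a bounded number of spins' species membership perturbs $p_N$ by $O(1/N)$. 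Assembling the positivity of the quadratic form correctly — making sure every cross term is absorbed and no sign is lost — is the delicate computational heart of the argument, but it is entirely driven by $\mathbf{\Delta} \succeq 0$.
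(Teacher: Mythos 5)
Your argument is the paper's: Guerra--Toninelli interpolation between the $N$-system and two independent sub-systems at identical densities, exact cancellation of the diagonal covariance term by $N=N_1+N_2$, convexity of $\mathbf{x}\mapsto(\mathbf{x},\mathbf{\Delta}\mathbf{x})$ (from $\mathbf{\Delta}\succeq 0$) applied to the overlap decomposition $\mathbf{q}_N=\tfrac{N_1}{N}\mathbf{q}_{N_1}+\tfrac{N_2}{N}\mathbf{q}_{N_2}$, and Fekete's lemma. Note one sign slip, however: to conclude superadditivity $\varphi(1)\ge\varphi(0)$ you need $\varphi'(t)\ge 0$, not $\varphi'(t)\le 0$; with the standard Guerra--Toninelli derivative identity one gets
$$\varphi'(t)=-\tfrac{N}{2}\,\mathbb{E}\,\Omega_{N,t}\big(Q_N\big),\qquad Q_N:=(\mathbf{q}_N,\mathbf{\Delta}\mathbf{q}_N)-\tfrac{N_1}{N}(\mathbf{q}_{N_1},\mathbf{\Delta}\mathbf{q}_{N_1})-\tfrac{N_2}{N}(\mathbf{q}_{N_2},\mathbf{\Delta}\mathbf{q}_{N_2})\le 0,$$
so $\varphi'\ge 0$, as required. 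The integrality concern you flag is real but is sidestepped in the paper by taking the limit at \emph{exactly} fixed densities, under which the diagonal cancellation and the convex decomposition of $\mathbf{q}_N$ hold identically.
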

Notice that, since the relative densities are kept constants, the condition that $\mathbf{\Delta}$ is positive semi-definite is independent of the $\alpha$'s.\\

In the single species case the model reduces to the Sherringhton-Kirpatrick model. Previous work \cite{broken}, by using the interpolation method, show that the celebrated Parisi's solution is an upper bound for pressure. The interpolation ( usually called RSB interpolation) is defined through a non-decreasing, piecewise constant function $x(q):[0,1] \longrightarrow [0,1]$ which represents the cumulative distribution of the overlap (see \cite{talbook2}) and the order parameter of the model. One of the key points of the proof, is that the function $x(q)$ intrinsically defines a non decreasing sequence $(m_l)_{l=0,..., K+1}$, thus enabling the control of the sign of the derivative of the interpolating functional. Following the same approach, in the multi-species case we define the order parameter as a suitable piecewise constant, right continuous, function
\be\label{orderparadef}
x(\mathbf{u}):[0,1]^{S} \rightarrow [0,1].
\ee
Let us show the explicit construction. Let $K$ be an integer and consider a sequence of points in $\Gamma\in [0,1]^{S}$, with
\be\label{qtrialrsb}
\Gamma:=(\mathbf{q}_l)_{l=0,\ldots,K}=\Big(q_l^{(s)}\Big)_{s\in \mathcal{S}, l=0,\ldots,K}
\ee
such that for each $s\in \mathcal{S}$, we have
$$0=q_0^{(s)}\leq q_1^{(s)}\leq\cdots\leq q_{K-1}^{(s)}\leq q_{K}^{(s)}=1.$$
Roughly speaking $\Gamma$ defines a path with $K$ steps in $[0,1]^S$ which is non decreasing in each direction.
\\
We consider a sequence $(m_0,..., m_{K+1})$ such that $0=m_0\leq m_1\leq \cdots \leq m_{K}\leq m_{K+1}=1$.
\\
If we denote by $\theta(\cdot)$  the right continuous Heaviside  function, we define the functional order parameter as
\be\label{orderpara}
x(\mathbf{u}):=\sum^{K}_{l=0}(m_{l+1}-m_l) \prod_{s\in \mathcal{S}}\theta(u^{(s)}-q_l^{(s)})
\ee
where $\mathbf{u}=(u^{(s)})_{s\in\mathcal{S}}$ is vector in $[0,1]^S$.
\\
The function $x$ defines an $S$-dimensional shape, that in the case of $S=2$, looks like a $ziggurat$\footnote{Ziggurats are pyramid-like structures found in the ancient Mesopotamian valley and western Iranian plateau.} and clearly, in the analogy of the single species case, it must be represent the joint cumulative distribution of the relatives overlaps.
\\
It's  useful to introduce, for each $s\in \mathcal{S}$ the canonical projection operator $\mathcal{P}_s$ in $\mathbb{R}^S$, and for $l=0,\ldots,K$, the following quantity:
\be\label{trialbig}
Q_l^{(s)}=\frac{2}{\alpha^{(s)}}\mathcal{P}_s\Big(\mathbf{\Delta}\mathbf{q}_l\Big)
\ee
where $\mathcal{P}_s$ is the canonical projection operator  in $\mathbb{R}^S$ define in (\ref{projection}).
\\
Notice that, for each $s\in \mathcal{S}$, the sequence $(Q_l^{(s)})_{l=0,\ldots K}$ is non-decreasing.
To complete the picture we need to introduce a transformed order parameter
\be\label{modorderpara}
x_{\Delta}(\mathbf{u}):=\sum^{K}_{l=0}(m_{l+1}-m_l)\prod_{s\in\mathcal{S}} \theta(u^{(s)}-Q_l^{(s)})
\ee
defined for $ \mathbf{u}\in \times_{s\in \mathcal{S}}[0,Q_K^{(s)}]$.\medskip

We define the trial RSB pressure as
\be\label{parisiRSB}
p_{RSB}(x):=\log2+\sum_{s\in\mathcal{S}}\alpha{(s)}f^{(s)}(0,h^{(s)})
-\frac{1}{2}\int_{\widetilde{\Gamma}}\ \ x(\mathbf{u})\ \ \nabla_{\mathbf{u}}\Big(\mathbf{u},\mathbf{\Delta}\mathbf{u}\Big)\cdot d\mathbf{u}
\ee
where, for each $s\in \mathcal{S}$, $f^{(s)}(u^{(s)},y)$ is the solution of the following Parisi's PDE
\begin{equation}
\label{antipara} \frac{\partial f^{(s)}}{\partial u^{(s)}} +
{1\over2}\frac{\partial^2f^{(s)}}{\partial y^2}+{1\over2}x_{\mathbf{\Delta}}(u^{(s)})\Big(\frac{\partial f^{(s)}}{\partial y}\Big)^2=0
\end{equation}
where $x_{\mathbf{\Delta}}(u^{(s)})$ is the marginal value of the transformed order parameter and the boundary condition is
\be\label{parisibou}
f^{(s)}(Q_K^{(s)},y)=\log\cosh(y).
\ee
The integral in (\ref{parisiRSB}) is a line integral on an arbitrary path $\widetilde{\Gamma}$ in the plan $\mathbf{u}$, starting from $\mathbf{0}$ and ending in $\mathbf{1}$, such that all the points $(\mathbf{q}_l)_{l=0,\ldots,K}$ belong to $\Gamma$, in other words $\Gamma\subset\widetilde{\Gamma}$.\smallskip

The main result of the paper is the next
\begin{thm}\label{thrsb}
The following sum rule holds
\be\label{sumrulersb}
p_N=p_{RSB}(x)-\frac{1}{2}\sum^K_{l=0}(m_{l+1}-m_l)\int^1_0dt\Big\langle(\mathbf{q}_N-\mathbf{q}_{l}),\mathbf{\Delta} (\mathbf{q}_N-\mathbf{q}_{l})\Big\rangle_{N,l,t}.
\ee
Moreover if the matrix $\mathbf{\Delta}$ is positive semi-definite we have the following bound
$$p_N\leq p_{RSB}(x),$$
and the optimization gives
$$
p_N\leq \inf_{x} p_{RSB}(x).
$$
\end{thm}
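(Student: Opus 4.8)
The plan is to use Guerra's interpolation scheme adapted to the multi-species structure, which yields the sum rule \eqref{sumrulersb} and then, under positivity of $\mathbf{\Delta}$, the bound $p_N \le p_{RSB}(x)$. First I would introduce an interpolating Hamiltonian depending on $t\in[0,1]$ which, at $t=1$, reproduces the original model \eqref{hami}, and at $t=0$ decouples into a sum of single-species Parisi-type systems whose pressure is exactly $\log 2 + \sum_s \alpha^{(s)} f^{(s)}(0,h^{(s)})$ together with the deterministic correction $-\frac12\int_{\widetilde\Gamma} x(\mathbf{u})\nabla_{\mathbf u}(\mathbf u,\mathbf\Delta\mathbf u)\cdot d\mathbf u$. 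The interpolation is built iteratively over the $K$ levels of the ziggurat: at level $l$ one adds independent Gaussian fields with covariance governed by the increments $\mathbf q_{l+1}-\mathbf q_l$ of the path $\Gamma$ (projected through $\mathbf\Delta$ via the quantities $Q_l^{(s)}$ in \eqref{trialbig}), and the weights $m_l$ enter through the iterated ``$\frac{1}{m_l}\log \mathbb{E}_l$'' averaging, exactly as in \cite{broken}. The boundary condition \eqref{parisibou} and the Parisi PDE \eqref{antipara} arise because, at each species $s$, the recursion on the added Gaussians is precisely the discretization whose continuum limit is the heat equation with the nonlinear $x_\Delta$ term.

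Next I would compute the $t$-derivative of the interpolating pressure. Using Gaussian integration by parts (Wick/Price formula) on the Gaussian disorder, the derivative splits into a ``diagonal'' term coming from the self-overlap (which equals $1$ for Ising spins and produces, after summing the telescoping contributions of the $\mathbf\Delta$-quadratic form along $\Gamma$, exactly the line-integral term in $p_{RSB}$) and an ``off-diagonal'' term involving the replica overlaps. The key algebraic identity is that $C_N(\sigma,\tau) = N(\mathbf q_N,\mathbf\Delta\mathbf q_N)$ from \eqref{quad}, so that after the integration by parts the two-replica contribution organizes into $\langle (\mathbf q_N-\mathbf q_l),\mathbf\Delta(\mathbf q_N-\mathbf q_l)\rangle_{N,l,t}$ summed against $(m_{l+1}-m_l)$, which is the content of the sum rule \eqref{sumrulersb}. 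Here $\langle\cdot\rangle_{N,l,t}$ denotes the deformed Gibbs state at interpolation parameter $t$ and level $l$; one must check that the cross terms between different levels cancel, which is where the monotonicity $m_0\le\cdots\le m_{K+1}$ and the nesting of the $Q_l^{(s)}$ are used, together with the convexity/positivity argument.

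Having established the sum rule, the bound follows immediately: since $\mathbf\Delta$ is positive semi-definite, each quadratic form $\langle(\mathbf q_N-\mathbf q_l),\mathbf\Delta(\mathbf q_N-\mathbf q_l)\rangle_{N,l,t}$ is nonnegative, and each coefficient $m_{l+1}-m_l\ge 0$, hence the subtracted sum in \eqref{sumrulersb} is nonnegative and $p_N\le p_{RSB}(x)$. Taking the infimum over all admissible order parameters $x$ (i.e. over all choices of $K$, paths $\Gamma$, and weights $(m_l)$) gives $p_N\le\inf_x p_{RSB}(x)$.

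The main obstacle, and the step requiring the most care, is verifying that the multi-species interpolation genuinely reduces at each level to the \emph{single-species} Parisi PDE \eqref{antipara} with the \emph{transformed} order parameter $x_\Delta$ rather than $x$ itself: the reparametrization $q\mapsto Q^{(s)} = \frac{2}{\alpha^{(s)}}\mathcal P_s(\mathbf\Delta\mathbf q)$ is exactly what makes the added Gaussian increments species-diagonal, and one must check that this change of variables is consistent across all species simultaneously (it is, precisely because the same path $\Gamma$ and the same weights $m_l$ are shared), so that the product-of-Heavisides structure in \eqref{orderpara} is preserved under the transformation \eqref{modorderpara}. A secondary technical point is the control of the error terms in passing from the discrete $K$-step recursion to the continuum PDE, and the justification that the line integral in \eqref{parisiRSB} is path-independent on paths refining $\Gamma$ — this is a consequence of $\nabla_{\mathbf u}(\mathbf u,\mathbf\Delta\mathbf u)$ being an exact gradient, so the integral depends only on the endpoints $\mathbf 0$ and $\mathbf 1$ and on the values of $x$ along $\Gamma$.
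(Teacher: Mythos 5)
Your proposal follows exactly the paper's route: Guerra's RSB interpolation built from species-diagonal one-body Gaussian fields whose covariances are governed by $\mathcal{P}_s\bigl(\mathbf{\Delta}(\mathbf{q}_l-\mathbf{q}_{l-1})\bigr)$, the iterated $m_l$-weighted recursive averaging, Gaussian integration by parts for the $t$-derivative, the line-integral rewriting of the deterministic terms (the paper's Proposition~\ref{intrepre}), and the sign argument from $\mathbf{\Delta}\ge 0$ together with $m_{l+1}-m_l\ge 0$. Two small imprecisions in your sketch, neither a gap: cross-level terms vanish simply because the auxiliary fields are independent across $l$ (the $\delta_{ll'}$ in their covariance), not because of the monotonicity of the $m_l$'s or convexity, which enter only in the final positivity step; and since $x$ is piecewise constant the Parisi recursion is solved exactly level by level, so no discrete-to-continuum error control is required for the theorem as stated.
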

We stress the fact that the {\em ziggurat} ansatz  forces the joint overlap distribution to have a very special structure.
In a recent paper \cite{panchenkomulti}, which appeared
after this work, the author proved that the previous bound is exact, showing that this structure in fact encodes all the information of the model.
\\
All the results are given under the hypothesis that the mutual interaction does not exceed a threshold where the strength of the off-diagonal terms prevails on the inter-party interactions. The latter is intrinsically different because the model approaches the Hopfield model for neural network \cite{BBCS,BGG1,BGGT}, on which we plan to report soon.
\\
To frame properly the main result within a clear scenario for the multi-species mean field spin glass, the paper is organized as follows: In section 2 we prove Theorem \ref{therlimite}, i.e. the existence of thermodynamic limit.
Section 3 studies the annealed region with the second moment method. Section 4 considers the replica symmetric bound and shows that at low temperatures it has a negative entropy. Finally in section 5 we give an extensive proof of  Theorem \ref{thrsb}.

\section{The thermodynamical limit at fixed densities}

In this section we prove Theorem \ref{therlimite}. The strategy of the proof follows classical Guerra-Toninelli arguments. Let us consider two non interacting and  $i.i.d.$ copies of the original system defined by the Hamiltonian (\ref{hami}) of sizes respectively $N_1,\ N_2$. Clearly this implies that we have to consider $\forall s\in\mathcal{S}$ the relative subsets $\Lambda^{(s)}_{N_1},\Lambda^{(s)}_{N_2}$ defined by the equations (\ref{par1}), (\ref{par2}), (\ref{par3}) and such that
\begin{gather*}
\Lambda^{(s)}_{N_1}\cup\Lambda^{(s)}_{N_2}=\Lambda^{(s)}_N,\\
\Lambda^{(s)}_{N_1}\cap\Lambda^{(s)}_{N_2}=\emptyset, \\
|\Lambda^{(s)}_{N_1}|=N_1^{(s)},\\
|\Lambda^{(s)}_{N_2}|=N_2^{(s)},\\
N_1^{(s)}+N_2^{(s)}=N^{(s)}.
\end{gather*}
More explicitly, we can define,  $\forall s\in\mathcal{S}$, the following
\begin{gather}
\label{labelling}
\Lambda^{(s)}_{N}=\{1,\ldots, N^{(s)}\},\\
\label{labelling1}
\Lambda^{(s)}_{N_1}=\{1,\ldots, N_1^{(s)}\},\\
\label{labelling2}
\Lambda^{(s)}_{N_2}=\{N_1^{(s)}+1,\ldots, N^{(s)}\}.
\end{gather}
Consider the following interpolating Hamiltonian
\be\label{haminter}
H_N(\sigma,t)=\sqrt{t}H_N(\sigma)+\sqrt{1-t}\Big(H_{N_1}(\sigma)+H_{N_2}(\sigma)\Big)
\ee
where
\be\label{haminter1}
H_{N_1}(\sigma)=-\frac{1}{\sqrt{N_1}}\sum_{s,p\in\mathcal{S}} \sum_{i\in\Lambda^{(s)}_{N_1}}\sum_{j\in\Lambda^{(p)}_{N_1}} J'^{(sp)}_{ij}\sigma^{(s)}_i\sigma^{(p)}_j,
\ee
\be\label{haminter2}
H_{N_2}(\sigma)=-\frac{1}{\sqrt{N_2}}\sum_{s,p\in\mathcal{S}} \sum_{i\in\Lambda^{(s)}_{N_2}}\sum_{j\in\Lambda^{(p)}_{N_2}} J''^{(sp)}_{ij}\sigma^{(s)}_i\sigma^{(p)}_j,
\ee
and where $J'^{(sp)}_{ij}$ and $J''^{(sp)}_{ij}$ are $i.i.d.$ of $J^{(sp)}_{ij}$.
\\
As usual we consider the interpolating  pressure
\be\label{preinter2}
P_N(t)=\mathbb{E}\log Z_N(t)=\mathbb{E}\log\sum_{\sigma}a_N(\sigma,\mathbf{h})e^{-H_{N}(\sigma,t)},
\ee
whose boundaries values are
\begin{eqnarray}\label{boundary}
P_N(1)&\equiv& P_N, \\
P_N(0)&\equiv& P_{N_1}+P_{N_2},
\end{eqnarray}
since $\Sigma_{N}=\Sigma_{N_1}\cup \Sigma_{N_2}$ and $\Sigma_{N_1}\cap \Sigma_{N_2}=\emptyset$.

\begin{prop}\label{derivativeTL}
The $t$-derivative of the interpolating pressure is
$$\frac{\partial}{\partial t}P_N(t)=-\frac{N}{2}\mathbb{E}\Omega_{N,t}
\Big(Q_N\Big),$$
where
\be\label{deltaq}
Q_N(\sigma,\tau):=\Big(\mathbf{q}_N, \mathbf{\Delta} \mathbf{q}_N\Big)-\frac{N_1}{N}\Big(\mathbf{q}_{N_1}, \mathbf{\Delta} \mathbf{q}_{N_1}\Big)-\frac{N_2}{N}\Big(\mathbf{q}_{N_2}, \mathbf{\Delta} \mathbf{q}_{N_2}\Big),
\ee
and the vectors $\mathbf{q}_{N_1},\mathbf{q}_{N_2}$ are defined as in \eqref{quadvec}.
\end{prop}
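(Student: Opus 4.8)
\emph{Proof idea.} I would run the standard Guerra--Toninelli computation: differentiate $P_N(t)$ in $t$ and then remove the explicit Gaussian dependence by Gaussian (Wick) integration by parts. A routine domination argument allows differentiating under $\mathbb{E}$ (for $t$ in a compact subset of $(0,1)$, $\bigl|\partial_t\log Z_N(t)\bigr|$ is bounded by a fixed integrable function of the disorder $J,J',J''$), and the chain rule gives
\[
\frac{\partial}{\partial t}P_N(t)=-\,\mathbb{E}\,\Omega_{N,t}\!\left(\frac{\partial H_N(\sigma,t)}{\partial t}\right),\qquad \frac{\partial H_N(\sigma,t)}{\partial t}=\frac{1}{2\sqrt{t}}\,H_N(\sigma)-\frac{1}{2\sqrt{1-t}}\bigl(H_{N_1}(\sigma)+H_{N_2}(\sigma)\bigr),
\]
so it suffices to compute the three averages $\mathbb{E}\,\Omega_{N,t}(H_N)$, $\mathbb{E}\,\Omega_{N,t}(H_{N_1})$ and $\mathbb{E}\,\Omega_{N,t}(H_{N_2})$.

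Each of these I would evaluate by integrating by parts in the corresponding \emph{independent} family $J$, $J'$ or $J''$. Since a given family enters the interpolating Hamiltonian through only one of the three blocks, differentiating the Gibbs state produces a single prefactor ($\sqrt{t}/\sqrt{N}$ for $J$, $\sqrt{1-t}/\sqrt{N_1}$ for $J'$, $\sqrt{1-t}/\sqrt{N_2}$ for $J''$) via the elementary identity $\partial_{J^{(sp)}_{ij}}\Omega_{N,t}(f)=\tfrac{\sqrt{t}}{\sqrt{N}}\bigl(\Omega_{N,t}(f\,\sigma^{(s)}_i\sigma^{(p)}_j)-\Omega_{N,t}(f)\,\Omega_{N,t}(\sigma^{(s)}_i\sigma^{(p)}_j)\bigr)$ and its analogues. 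Taking $f$ equal to the bond variable $\sigma^{(s)}_i\sigma^{(p)}_j$, using $(\sigma^{(s)}_i)^2=1$, introducing a second independent replica $\sigma^2$, and re-summing the bond sums through \eqref{covamodel2} — and its $N_1$, $N_2$ counterparts, which carry the same matrix $\mathbf{\Delta}$ because the species densities are held fixed — I expect to obtain, writing $C_{N_a}(\sigma,\tau):=\mathbb{E}(H_{N_a}(\sigma)H_{N_a}(\tau))$,
\[
\mathbb{E}\,\Omega_{N,t}(H_N)=-\sqrt{t}\;\mathbb{E}\,\Omega_{N,t}\bigl[C_N(\sigma^1,\sigma^1)-C_N(\sigma^1,\sigma^2)\bigr],
\]
and the same identity with $\sqrt{t}\to\sqrt{1-t}$ and $C_N\to C_{N_1}$ (resp.\ $C_{N_2}$) for $H_{N_1}$ (resp.\ $H_{N_2}$).

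Substituting back, the $\tfrac1{2\sqrt{t}}$ and $\tfrac1{2\sqrt{1-t}}$ prefactors in $\partial_t H_N(\sigma,t)$ cancel exactly the $\sqrt{t}$ and $\sqrt{1-t}$ produced by the integration by parts; then, using $C_N(\sigma,\tau)=N(\mathbf{q}_N,\mathbf{\Delta}\mathbf{q}_N)$ and $C_{N_a}(\sigma,\tau)=N_a(\mathbf{q}_{N_a},\mathbf{\Delta}\mathbf{q}_{N_a})$, one is left with
\[
\frac{\partial}{\partial t}P_N(t)=-\frac12\,\mathbb{E}\,\Omega_{N,t}\Bigl[N(\mathbf{q}_N,\mathbf{\Delta}\mathbf{q}_N)-N_1(\mathbf{q}_{N_1},\mathbf{\Delta}\mathbf{q}_{N_1})-N_2(\mathbf{q}_{N_2},\mathbf{\Delta}\mathbf{q}_{N_2})\Bigr]+\frac12\bigl(N-N_1-N_2\bigr)\bigl(\mathbf{1},\mathbf{\Delta}\mathbf{1}\bigr),
\]
where $\mathbf{1}=(1,\dots,1)$ and the last summand collects the self-overlap (diagonal) terms, since $q^{(s)}(\sigma,\sigma)=1$ makes them $\sigma$-independent. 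That summand vanishes by \eqref{par3}, and the first one equals $-\tfrac{N}{2}\,\mathbb{E}\,\Omega_{N,t}(Q_N)$ with $Q_N$ as in \eqref{deltaq}, which is the claim.

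The bond re-summations are routine; the delicate points are the bookkeeping of the $\sqrt{t}$, $\sqrt{1-t}$ and $N^{-1/2}$, $N_a^{-1/2}$ factors so that the clean constant $-N/2$ emerges, and the cancellation of the self-overlap (``Nishimori-type'') contributions, which is precisely where $N=N_1+N_2$ (equivalently $N^{(s)}=N_1^{(s)}+N_2^{(s)}$) is used. I expect this accounting, rather than anything conceptual, to be the only real obstacle; justifying differentiation under the expectation is standard and quick.
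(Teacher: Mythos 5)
Your proposal is correct and follows the same route as the paper: the paper's proof simply invokes the general Gaussian-interpolation lemma (Proposition~\ref{gauinter} of the appendix) and points out that the constant self-overlap term cancels because $N=N_1+N_2$, which is exactly the computation you unwind by hand via Wick integration by parts in the $J,J',J''$ families. Your bookkeeping of the $\sqrt{t}$, $\sqrt{1-t}$, $N^{-1/2}$, $N_a^{-1/2}$ factors and the identification of the vanishing $(N-N_1-N_2)(\mathbf{1},\mathbf{\Delta}\mathbf{1})$ term are both accurate.
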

 \begin{proof}The computation of the $t$-derivative works essentially in the same way exploited  in Proposition \ref{gauinter} with the following identifications:
$$i\rightarrow\sigma, \ \ a_i\rightarrow a_N(\sigma,\mathbf{h}), \ \ U_i\rightarrow H_{N}(\sigma), \ \
\widetilde{U}_i\rightarrow H_{N_1}(\sigma)+H_{N_2}(\sigma)$$
The key ingredient is that the diagonal term vanishes by the condition $N=N_1+N_2$.
\end{proof}
Combining the Fundamental Theorem of Calculus and the previous proposition we have that
\be\label{haminter12}
P_N-P_{N_1}-P_{N_2}=-\frac{N}{2}\int^1_0dt\mathbb{E}\Omega_{N,t}
\Big(Q_N\Big).
\ee
To finish the proof is sufficient to show that
\begin{prop}\label{convcond}
If the matrix $\mathbf{\Delta}$ is positive semi-definite, then
\be\label{estimate}
Q_N(\sigma,\tau)\leq 0
\ee
for every $\sigma,\tau$ and $N$.
\end{prop}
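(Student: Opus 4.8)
The plan is to expand $Q_N(\sigma,\tau)$ in terms of the relative overlaps and reduce the inequality to a statement about a convex combination of quadratic forms built from the positive semi-definite matrix $\mathbf{\Delta}$. First I would record the key algebraic identity relating the overlaps of the full system to those of the two subsystems: for each species $s$, since $\Lambda^{(s)}_N = \Lambda^{(s)}_{N_1}\cup\Lambda^{(s)}_{N_2}$ is a disjoint union, summing $\sigma^{(s)}_i\tau^{(s)}_i$ over $\Lambda^{(s)}_N$ splits into the sum over $\Lambda^{(s)}_{N_1}$ plus the sum over $\Lambda^{(s)}_{N_2}$. Dividing by $N^{(s)}$ and using $N^{(s)}=N_1^{(s)}+N_2^{(s)}$ gives
\be
q^{(s)}_N = \frac{N_1^{(s)}}{N^{(s)}}\,q^{(s)}_{N_1} + \frac{N_2^{(s)}}{N^{(s)}}\,q^{(s)}_{N_2},
\ee
i.e. each component of $\mathbf{q}_N$ is a convex combination (with species-dependent weights) of the corresponding components of $\mathbf{q}_{N_1}$ and $\mathbf{q}_{N_2}$.

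Next I would rewrite $Q_N$ as a quadratic form evaluated at these vectors. Writing $\lambda_s := N_1^{(s)}/N^{(s)}$ so that $1-\lambda_s = N_2^{(s)}/N^{(s)}$, and using $N_1/N$, $N_2/N$ for the global weights, the expression
\be
Q_N = (\mathbf{q}_N,\mathbf{\Delta}\mathbf{q}_N) - \frac{N_1}{N}(\mathbf{q}_{N_1},\mathbf{\Delta}\mathbf{q}_{N_1}) - \frac{N_2}{N}(\mathbf{q}_{N_2},\mathbf{\Delta}\mathbf{q}_{N_2})
\ee
has to be shown nonpositive. The subtlety — and the main obstacle — is that the weights $\lambda_s$ mixing the subsystem overlaps are \emph{not} the same as the global weights $N_1/N$, $N_2/N$, so one cannot directly invoke convexity of $\mathbf{u}\mapsto(\mathbf{u},\mathbf{\Delta}\mathbf{u})$. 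However, this is exactly where the constant-density hypothesis enters: since $\alpha^{(s)}_N=\alpha^{(s)}$ is independent of $N$, and similarly (as the proof is carried out along the subsequence where densities are preserved) $N_1^{(s)}/N_1 = \alpha^{(s)} = N_2^{(s)}/N_2$, one gets $N_1^{(s)} = \alpha^{(s)}N_1$ and $N_2^{(s)} = \alpha^{(s)}N_2$, hence $N^{(s)}=\alpha^{(s)}N$ and therefore
\be
\lambda_s = \frac{N_1^{(s)}}{N^{(s)}} = \frac{\alpha^{(s)}N_1}{\alpha^{(s)}N} = \frac{N_1}{N}
\ee
for \emph{every} species $s$. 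Thus all the mixing weights collapse to the single value $t_1 := N_1/N$, and $\mathbf{q}_N = t_1\mathbf{q}_{N_1} + (1-t_1)\mathbf{q}_{N_2}$ as vectors.

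With that reduction in hand the conclusion is immediate: the function $\phi(\mathbf{u}) := (\mathbf{u},\mathbf{\Delta}\mathbf{u})$ is convex on $\mathbb{R}^S$ precisely because $\mathbf{\Delta}$ is positive semi-definite, so
\be
\phi\big(t_1\mathbf{q}_{N_1}+(1-t_1)\mathbf{q}_{N_2}\big) \leq t_1\,\phi(\mathbf{q}_{N_1}) + (1-t_1)\,\phi(\mathbf{q}_{N_2}),
\ee
which is exactly $Q_N(\sigma,\tau)\leq 0$, uniformly in $\sigma,\tau$ and $N$. (Equivalently, one can write $Q_N = -t_1(1-t_1)\,(\mathbf{q}_{N_1}-\mathbf{q}_{N_2},\mathbf{\Delta}(\mathbf{q}_{N_1}-\mathbf{q}_{N_2}))$ by direct expansion, making the sign manifest from positive semi-definiteness alone.) Combined with \eqref{haminter12}, this gives $P_N \geq P_{N_1}+P_{N_2}$, i.e. superadditivity of $P_N$ along the constant-density sequence, and the Fekete lemma then yields $\lim_N p_N = \sup_N p_N$, completing the proof of Theorem \ref{therlimite}. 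The only place requiring care is making the density bookkeeping explicit so that the single-weight collapse is justified; everything else is the standard Guerra--Toninelli argument.
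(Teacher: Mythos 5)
Your proof is correct and follows essentially the same route as the paper's: you derive the species-wise convex combination $q^{(s)}_N = (N_1^{(s)}/N^{(s)})q^{(s)}_{N_1} + (N_2^{(s)}/N^{(s)})q^{(s)}_{N_2}$, use the fixed-density hypothesis to collapse every weight to $N_1/N$ so that $\mathbf{q}_N$ is a genuine convex combination of $\mathbf{q}_{N_1}$ and $\mathbf{q}_{N_2}$ with a single scalar weight, and then invoke convexity of $\mathbf{u}\mapsto(\mathbf{u},\mathbf{\Delta}\mathbf{u})$. The closed-form identity $Q_N = -\tfrac{N_1 N_2}{N^2}\,(\mathbf{q}_{N_1}-\mathbf{q}_{N_2},\mathbf{\Delta}(\mathbf{q}_{N_1}-\mathbf{q}_{N_2}))$ you add is a pleasant sharpening that the paper does not state, but it does not change the argument.
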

\begin{proof}
First at all, we write some fundamental relations.
\newline
By definitions (\ref{relover}), (\ref{labelling}), (\ref{labelling1}), (\ref{labelling2}) we have that $\forall s\in\mathcal{S}$ the following hold
$$N^{(s)}q^{(s)}_{N}(\sigma,\tau)=\sum_{i=1}^{N^{(s)}}\sigma^{(s)}_i\tau^{(s)}_i=
\sum_{i=1}^{N_1^{(s)}}\sigma^{(s)}_i\tau^{(s)}_i+\sum_{N^{(s)}+1}^{N^{(s)}}\sigma^{(s)}_i\tau^{(s)}_i
$$
then
$$
q^{(s)}_{N}(\sigma,\tau)=\frac{N_1^{(s)}}{N^{(s)}}q^{(s)}_{N_1}(\sigma,\tau)+\frac{N_2^{(s)}}{N^{(s)}}q^{(s)}_{N_2}(\sigma,\tau).$$
Now we observe that the condition of fixed relatives densities  implies that
$$\frac{N_1^{(s)}}{N^{(s)}}=\frac{N_1^{(s)}}{N_1}\frac{N}{N^{(s)}}\frac{N_1}{N}=
\frac{\alpha^{(s)}}{\alpha^{(s)}}\frac{N_1}{N}=\frac{N_1}{N},$$
and in a similar fashion
$$\frac{N_2^{(s)}}{N^{(s)}}=\frac{N_2}{N},$$
then $\forall s\in\mathcal{S}$ the following holds
\be\label{qsumrule}
q^{(s)}_{N}(\sigma,\tau)=\frac{N_1}{N}q^{(s)}_{N_1}(\sigma,\tau)+\frac{N_2}{N}q^{(s)}_{N_2}(\sigma,\tau).
\ee
In vector notation we can write
\be\label{qsumrulev}
 \mathbf{q}_N=\frac{N_1}{N}\mathbf{q}_{N_1}+\frac{N_2}{N}\mathbf{q}_{N_2}.
\ee
It is easy to see that if $\mathbf{\Delta}$ is a positive semi-definite, real, symmetric matrix, hence the function
$$ \mathbf{x}\rightarrow \Big(\mathbf{x},\mathbf{\Delta} \mathbf{x}\Big)$$
defined for $ \mathbf{x} \in \mathbb{R}^S$ is convex and the conclusion follows straightforwardly from the relation (\ref{qsumrulev}).
\end{proof}

The last proposition, combined with equation (\ref{haminter}) gives immediately the superaddivity property of the pressure. As a consequence, since the quenched pressure density is bounded from the annealed one (see the next section), then by Fakete's lemma we get the statement of the theorem.

\section{The annealed bound}

As a first analysis we can study the annealed approximation for the pressure and investigate in which case it is exact. Using Jensen inequality and the concavity of the function $x\to\log (x)$ we define the annealed approximation as a bound, i.e.
\be
p_N=\frac 1 N \mathbb{E} \log Z_N\leq \frac 1 N \log \mathbb{E}Z_N =p^{A}_N.
\ee
We can easily write $p_N^A$ as
\begin{eqnarray}
p^A_N&=&\frac 1 N \log \sum_{\sigma}\mathbb{E}e^{-H_N(\sigma)}=
\frac 1 N \log \sum_{\sigma}e^{\frac 1 2 C_N(\sigma,\sigma)}=
\frac 1 N \log \sum_{\sigma}e^{\frac N 2 (\boldsymbol{1,\Delta 1})}\nonumber\\
&=&\log 2+\frac 1 2(\boldsymbol{1,\Delta 1}).
\end{eqnarray}
We define the ergodic regime as the region of the phase space in which
\be
\lim_{N\to\infty}\frac 1 N\mathbb{E}\log Z_N=\lim_{N\to\infty}\frac 1 N\log \mathbb{E} Z_N=p^A=\log 2+\frac 1 2(\mathbf{1,\Delta 1}).
\ee
For this purpose we can use a Borel-Cantelli argument to investigate the second moment, hence checking when
\be
\frac{\mathbb{E}(Z_N^2)}{\mathbb{E}^2(Z_N)}\leq C<\infty
\ee
for some constant $C\in\mathbb{R}$, uniformly in $N$. Since
\begin{eqnarray}
\mathbb{E}(Z_N^2)&=& \mathbb{E} \sum_{\sigma,\tau}e^{-H_N(\sigma)-H_N(\tau)}=\sum_{\sigma,\tau}e^{\frac 1 2 \mathbb{E}(H_N(\sigma)+H_N(\tau))^2}\\ \nonumber
&=&\sum_{\sigma,\tau}e^{N \left( (\boldsymbol{1,\Delta 1})+(\mathbf{q}_N,\boldsymbol{\Delta} \mathbf{q}_N)  \right) }= \mathbb{E}^2(Z_N) 2^{-2N}\sum_{\sigma,\tau}e^{N(\mathbf{q}_N,\boldsymbol{\Delta} \mathbf{q}_N)}
\end{eqnarray}
and using the gauge transformation $\tau^{(s)}_i\to\sigma^{(s)}_i\tau^{(s)}_i$,
\be
\frac{\mathbb{E}(Z_N^2)}{\mathbb{E}^2(Z_N)}=2^{-2N}\sum_{\sigma,\tau}e^{N(\mathbf{m}_N(\tau),\boldsymbol{\Delta} \mathbf{m}_N(\tau))}=
2^{-N}\sum_{\tau}e^{N(\mathbf{m}_N(\tau),\boldsymbol{\Delta} \mathbf{m}_N(\tau))},
\ee
where we define $\mathbf{m}_N(\tau)=\Big(m^{(s)}_N(\tau)\Big)_{s\in\mathcal{S}}$, with $m^{(s)}_N(\tau)=\frac 1 {N^{(s)}}\sum_{i=1}^{N^{(s)}}\tau^{(s)}_i$.
If $\det \boldsymbol{\Delta}>0$ we can linearize the quadratic form with a gaussian integration
\begin{eqnarray}
\frac{\mathbb{E}(Z_N^2)}{\mathbb{E}^2(Z_N)}&=&\frac {2^{-N}} {\sqrt{\det \boldsymbol{\Delta}}}\int\frac{d\mathbf{z}}{2\pi}e^{-\frac 1 2 (\mathbf{z},\boldsymbol{\Delta}^{-1}\mathbf{z})}\sum_{\tau}e^{\sqrt{2N}(\mathbf{m}_N(\tau),\mathbf{z})}\nonumber\\
&=&\frac 1 {\sqrt{\det \boldsymbol{\Delta}}}\int\frac{d\mathbf{z}}{2\pi}e^{-\frac 1 2 (\mathbf{z},\boldsymbol{\Delta}^{-1}\mathbf{z})}\prod_{s\in\mathcal{A}}\cosh^{N^{(s)}}\left( \frac{\sqrt{2N}}{N^{(s)}}z^{(s)} \right)\nonumber\\
&=&\frac 1 {\sqrt{\det (\boldsymbol{\Delta})}}\int\frac{d\mathbf{z}}{2\pi}e^{-\frac 1 2 (\mathbf{z},\boldsymbol{\Delta}^{-1}\mathbf{z})}e^{\sum_{s\in\mathcal{S}}N^{(s)}\log\cosh\left( \frac{\sqrt{2N}}{N^{(s)}}z^{(s)} \right)}
\end{eqnarray}
and, using the inequality $\log \cosh (x)\leq \frac{x^2}{2}$, we obtain
\be
\frac{\mathbb{E}(Z_N^2)}{\mathbb{E}^2(Z_N)}\leq \frac 1 {\sqrt{\det (\boldsymbol{\Delta})}}\int\frac{d\mathbf{z}}{2\pi}e^{-\frac 1 2 (\mathbf{z},\boldsymbol{\hat{\Delta}}\mathbf{z})},
\ee
where we have defined
\be
\boldsymbol{\hat{\Delta}}=\boldsymbol{\Delta}^{-1}-2\boldsymbol{\alpha}^{-1}
\ee
and the diagonal matrix $\boldsymbol{\alpha}=\operatorname{diag}(\{\alpha^{(s)}\}_{s\in\mathcal{S}}) $. Thus we have just proved the following
\begin{thm}\label{annealed region}
In the convex region, defined as $\det \boldsymbol{\Delta}>0$, as soon as $\boldsymbol{\hat{\Delta}}$ is positively defined, the pressure of the model does coincide with the annealed approximation, i.e.
\be
p=\lim_{N\to\infty}\frac 1 N\mathbb{E}\log Z_N=\lim_{N\to\infty}\frac 1 N\log \mathbb{E} Z_N=p^A=\log 2+\frac 1 2(\mathbf{1,\Delta 1}).
\ee
\end{thm}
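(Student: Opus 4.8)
The plan is to establish the two-sided estimate sandwiching the quenched pressure between the annealed pressure and something that converges to it, so that the limit is forced to equal $p^A$. We already have the upper bound $p_N \le p_N^A = \log 2 + \tfrac12(\mathbf{1},\mathbf{\Delta}\mathbf{1})$ for free from Jensen's inequality, and Theorem \ref{therlimite} guarantees the limit $p = \lim_N p_N$ exists (the matrix $\mathbf{\Delta}$ is positive semi-definite whenever $\det\mathbf{\Delta} > 0$ together with positivity of $\boldsymbol{\hat\Delta}$, since then all eigenvalues are positive). So the whole game is the matching lower bound $p \ge p^A$, and the natural route is the second-moment (Borel--Cantelli) method whose computation is already carried out in the excerpt.

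First I would record the chain of identities displayed above: after the Gaussian linearisation of the quadratic form $(\mathbf{m}_N(\tau),\mathbf{\Delta}\mathbf{m}_N(\tau))$ (legitimate since $\det\mathbf{\Delta}>0$) and the bound $\log\cosh x \le x^2/2$, one arrives at
\[
\frac{\mathbb{E}(Z_N^2)}{\mathbb{E}^2(Z_N)} \le \frac{1}{\sqrt{\det\mathbf{\Delta}}}\int \frac{d\mathbf{z}}{2\pi}\, e^{-\frac12(\mathbf{z},\boldsymbol{\hat\Delta}\mathbf{z})},
\]
with $\boldsymbol{\hat\Delta} = \mathbf{\Delta}^{-1} - 2\boldsymbol{\alpha}^{-1}$. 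When $\boldsymbol{\hat\Delta}$ is positive definite this Gaussian integral is a finite constant $C$ independent of $N$, so $\mathbb{E}(Z_N^2) \le C\,\mathbb{E}^2(Z_N)$ uniformly in $N$. Next I would apply the Paley--Zygmund inequality: for any $\lambda \in (0,1)$,
\[
\mathbb{P}\!\left(Z_N \ge \lambda\,\mathbb{E}Z_N\right) \ge (1-\lambda)^2\,\frac{\mathbb{E}^2(Z_N)}{\mathbb{E}(Z_N^2)} \ge \frac{(1-\lambda)^2}{C},
\]
a constant probability bounded away from zero. From here the standard argument gives a lower bound on $\frac1N\mathbb{E}\log Z_N$: on the event $\{Z_N \ge \lambda\,\mathbb{E}Z_N\}$ one has $\log Z_N \ge \log(\lambda\,\mathbb{E}Z_N)$, and one controls the complementary event using the deterministic lower bound $Z_N \ge 1$ (all summands in $Z_N$ are positive) or, if the external field forces otherwise, $Z_N \ge e^{-\|H_N\|_\infty}$ type estimates combined with Gaussian concentration of $\log Z_N$ around its mean. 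The cleanest finish is to invoke concentration of $\frac1N\log Z_N$ (a Lipschitz function of the Gaussian disorder with variance $O(1/N)$), which upgrades the ``constant probability'' statement to: $\frac1N\log Z_N \to p^A$ almost surely and in $L^1$, hence $p = p^A$.

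The main obstacle is the passage from ``$Z_N$ is comparable to its mean with probability bounded below'' to ``$\frac1N\mathbb{E}\log Z_N$ is close to $\frac1N\log\mathbb{E}Z_N$''. The subtlety is that a constant, non-vanishing probability of $Z_N$ being large is not by itself enough — one needs to rule out a small-probability region where $\log Z_N$ is very negative from dragging the expectation down. This is where Gaussian concentration for $\log Z_N$ (the map $\mathbf{J} \mapsto \log Z_N$ is Lipschitz in $\mathbf{J}$ with constant controlled by $\sqrt{N}\,\max_{sp}\Delta_{sp}$, giving fluctuations of order $\sqrt{N}$ for $\log Z_N$, i.e.\ $O(1/\sqrt N)$ for the density) does the work: it forces $\frac1N\log Z_N$ to stay within $o(1)$ of $\frac1N\mathbb{E}\log Z_N$ with overwhelming probability, so the event of comparability to the mean must already contain the bulk, pinning $p_N$ to $p_N^A - o(1)$. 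Combined with the trivial upper bound and the existence of the limit, this yields $p = p^A$, completing the proof.
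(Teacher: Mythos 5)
Your proposal is correct and takes essentially the same route as the paper: the second-moment computation yielding the uniform bound on $\mathbb{E}(Z_N^2)/\mathbb{E}^2(Z_N)$ is identical, and your Paley--Zygmund plus Gaussian-concentration argument is the standard way to flesh out the paper's terse ``Borel--Cantelli argument,'' correctly identifying that a bounded second-moment ratio alone does not pin down $\tfrac1N\mathbb{E}\log Z_N$ and must be combined with the $O(1/\sqrt{N})$ Lipschitz concentration of $\tfrac1N\log Z_N$. (One tacit hypothesis worth recording: the gauge transformation $\tau_i^{(s)}\mapsto\sigma_i^{(s)}\tau_i^{(s)}$ in the second-moment computation requires $\mathbf{h}=\mathbf{0}$.)
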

\begin{rem}
\noindent Note that such a region does exist and can be viewed as an high temperature region. The two regions $\det\boldsymbol{\Delta}>0$ and $\boldsymbol{\hat{\Delta}}>0$ have a non-zero measure intersection, because, while the first is a condition on the relative size of the covariances, the latter is related to their absolute amplitude. Indeed once fixed $\boldsymbol{\alpha}$ and $\boldsymbol{\Delta}$ satisfying $\det \boldsymbol{\Delta} >0$, we can rescale all the covariances with a parameter $\beta$, which play the role of the inverse temperature of the system,  i.e. $\Delta_{ss'}\to \beta \Delta_{ss'}$, $\forall s,s'\in\mathcal{S}$, leaving the relative sizes unaltered and the condition $\det \boldsymbol{\Delta}>0$ is still satisfied, such that $\boldsymbol{\hat{\Delta}}\to \beta^{-S}\boldsymbol{\Delta}^{-1}-2\boldsymbol{\alpha}^{-1}$ is positively defined for $\beta$ small enough \footnote{Since $\boldsymbol{\Delta}$ is positively defined then also $\boldsymbol{\Delta}^{-1}$. Defining $a=\max_s \alpha^{(s)}$ and $\rho$ the smallest eigenvalue of $\boldsymbol{\Delta}^{-1}$, then, for any non-null vector $z$, $(z,\boldsymbol{\hat{\Delta}} z)\geq (\beta^{-S}\rho-a)(z,z)>0$ if $\beta^S<\rho/a$.}.
\end{rem}

\section{The Replica Symmetric bound}

In this section we specialize Theorem \ref{thrsb} in the simplest case  to obtain the so called RS bound.
\\
The underlying idea is to compare the overlap vector (\ref{quadvec}) with a trial vector,
\be\label{qtrial}
\mathbf{q}_{trial}:=\Big(q^{(s)}\Big)_{s\in \mathcal{S}}.
\ee

We define the trial replica symmetric solution as
\be\label{rssolution}
p_{RS}(\mathbf{q}_{trial}):=\log2+\sum_{s\in\mathcal{S}}\alpha^{(s)}p^{(s)}(\mathbf{q}_{trial})
+\frac{1}{2}\Big((\mathbf{1}-\mathbf{q}_{trial}),\mathbf{\Delta}(\mathbf{1}-\mathbf{q}_{trial})\Big),
\ee
where
\be\label{scomprs}
p^{(s)}(\mathbf{q}_{trial}):=\int d\mu(z)\log \cosh\Big(\sqrt{\frac{2}{\alpha^{(s)}}\mathcal{P}_s\Big(\mathbf{\Delta}\mathbf{q}_{trial}\Big)}z+h^{(s)}\Big),
\ee
and
$$z\sim\mathcal{N}(0,1).$$

Setting $K = 2$, $m_1 = 0$, $m_2 = 1$ and $\mathbf{q}_1 = \mathbf{q}_{trial}$  in Theorem \ref{thrsb} we obtain the following:

\begin{prop}\label{RST}
The following sum rule holds
\be\label{sumrule}
p_N=p_{RS}(\mathbf{q}_{trial})-\frac{1}{2}\int^1_0\mathbb{E}\Omega_{N,t}\Big((\mathbf{q}_N-\mathbf{q}_{trial}),\mathbf{\Delta} (\mathbf{q}_N-\mathbf{q}_{trial})\Big).
\ee
Moreover, if the matrix $\mathbf{\Delta}$ is positive semi-definite, then the following bound holds
\be\label{bound}
p_N\leq p_{RS}(\mathbf{q}_{trial}),
\ee
whose optimization gives
\be\label{boundopti}
p_N\leq \inf_{\mathbf{q}_{trial}} p_{RS}(\mathbf{q}_{trial}).
\ee
\end{prop}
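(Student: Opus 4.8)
The plan is to obtain Proposition \ref{RST} as a direct specialization of the general sum rule in Theorem \ref{thrsb}. First I would set $K=2$, $m_0=0$, $m_1=0$, $m_2=1$ and $\mathbf{q}_0=\mathbf{0}$, $\mathbf{q}_1=\mathbf{q}_{trial}$, $\mathbf{q}_2=\mathbf{1}$ in the statement of Theorem \ref{thrsb}. The only term in the sum $\sum_{l=0}^K (m_{l+1}-m_l)(\cdots)$ that survives is $l=1$, since $m_1-m_0=0$ and $m_3-m_2=0$ (with the convention $m_3=m_{K+1}=1$); this immediately reduces the remainder term in \eqref{sumrulersb} to $-\frac12\int_0^1 dt\,\langle(\mathbf{q}_N-\mathbf{q}_{trial}),\mathbf{\Delta}(\mathbf{q}_N-\mathbf{q}_{trial})\rangle_{N,1,t}$, which is \eqref{sumrule}. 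So the substance of the proof is to check that, under these choices, $p_{RSB}(x)$ collapses to $p_{RS}(\mathbf{q}_{trial})$ as defined in \eqref{rssolution}.

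Second, I would carry out this collapse term by term in \eqref{parisiRSB}. The transformed order parameter \eqref{modorderpara} becomes $x_{\mathbf\Delta}(\mathbf u) = \prod_{s\in\mathcal S}\theta(u^{(s)}-Q_1^{(s)})$ with $Q_1^{(s)} = \frac{2}{\alpha^{(s)}}\mathcal P_s(\mathbf\Delta\mathbf q_{trial})$, so that its marginal $x_{\mathbf\Delta}(u^{(s)})$ jumps from $0$ to $1$ at $Q_1^{(s)}$. With this two-step $x_{\mathbf\Delta}$, the Parisi PDE \eqref{antipara} becomes the heat equation on $[0,Q_1^{(s)}]$ and the (nonlinear, but here with coefficient $1$) Parisi equation on $[Q_1^{(s)},Q_K^{(s)}]$; integrating backwards from the boundary condition $f^{(s)}(Q_K^{(s)},y)=\log\cosh y$ one finds that on the upper interval $f^{(s)}(u^{(s)},y)=\log\cosh y + \frac12(Q_K^{(s)}-u^{(s)})$ (the standard $m=1$ replica-symmetric solution of the Parisi equation), and then propagating the heat equation down to $u^{(s)}=0$ gives $f^{(s)}(0,h^{(s)}) = \int d\mu(z)\,\log\cosh\!\big(\sqrt{Q_1^{(s)}}\,z + h^{(s)}\big) + \frac12(Q_K^{(s)}-Q_1^{(s)})$. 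Recognizing $Q_1^{(s)} = \frac{2}{\alpha^{(s)}}\mathcal P_s(\mathbf\Delta\mathbf q_{trial})$ this matches $p^{(s)}(\mathbf q_{trial})$ in \eqref{scomprs} up to the additive constant $\frac12(Q_K^{(s)}-Q_1^{(s)})$.

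Third, I would handle the line integral. Since $x(\mathbf u) = \prod_s \theta(u^{(s)}-q_1^{(s)})$ equals $1$ on the upper box $\times_s[q_1^{(s)},1]$ and $0$ otherwise, choosing the path $\widetilde\Gamma$ to pass through $\mathbf 0$, $\mathbf q_{trial}$, $\mathbf 1$, the line integral $\int_{\widetilde\Gamma} x(\mathbf u)\,\nabla_{\mathbf u}(\mathbf u,\mathbf\Delta\mathbf u)\cdot d\mathbf u$ collapses to $\int_{\mathbf q_{trial}}^{\mathbf 1}\nabla_{\mathbf u}(\mathbf u,\mathbf\Delta\mathbf u)\cdot d\mathbf u = (\mathbf 1,\mathbf\Delta\mathbf 1)-(\mathbf q_{trial},\mathbf\Delta\mathbf q_{trial})$ by the fundamental theorem of calculus for line integrals of a gradient. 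Putting the pieces together, the $-\frac12$ times this, plus the collected constants $\sum_s\alpha^{(s)}\cdot\frac12(Q_K^{(s)}-Q_1^{(s)})$, should reorganize into $\frac12\big((\mathbf 1-\mathbf q_{trial}),\mathbf\Delta(\mathbf 1-\mathbf q_{trial})\big)$; this is the one bookkeeping identity I would verify carefully, using $\sum_s\alpha^{(s)}Q_K^{(s)} = 2(\mathbf 1,\mathbf\Delta\mathbf 1)$ and $\sum_s\alpha^{(s)}Q_1^{(s)} = 2(\mathbf q_{trial},\mathbf\Delta\mathbf q_{trial})$ (noting that $Q_K^{(s)}$ corresponds to $\mathbf q_K=\mathbf 1$), together with the symmetry of $\mathbf\Delta$ so that $(\mathbf 1,\mathbf\Delta\mathbf q_{trial})=(\mathbf q_{trial},\mathbf\Delta\mathbf 1)$. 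Once $p_{RSB}(x)=p_{RS}(\mathbf q_{trial})$ is established, \eqref{bound} and \eqref{boundopti} follow immediately from the corresponding statements in Theorem \ref{thrsb} under positive semi-definiteness of $\mathbf\Delta$ (the infimum being over the finite-dimensional parameter $\mathbf q_{trial}$ rather than over all admissible ziggurats). The main obstacle is purely computational: matching the constants produced by integrating the Parisi PDE with the $m=1$ boundary layer against the line-integral term so that everything reassembles into the clean quadratic form $\frac12((\mathbf 1-\mathbf q_{trial}),\mathbf\Delta(\mathbf 1-\mathbf q_{trial}))$.
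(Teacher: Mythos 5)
Your overall route is exactly the paper's: the authors simply state that Proposition \ref{RST} follows from Theorem \ref{thrsb} by setting $K=2$, $m_1=0$, $m_2=1$, $\mathbf{q}_1=\mathbf{q}_{trial}$ (with $\mathbf{q}_0=\mathbf{0}$, $\mathbf{q}_2=\mathbf{1}$ forced by \eqref{qtrialrsb}) and leave the verification that $p_{RSB}(x)$ collapses to $p_{RS}(\mathbf{q}_{trial})$ entirely implicit, so your detailed unwinding of the Parisi PDE and the line integral is the argument the paper tacitly assumes. Your solution of the PDE on the upper interval, the heat-equation propagation to $u=0$, and the reduction of the line integral to $(\mathbf 1,\mathbf\Delta\mathbf 1)-(\mathbf q_{trial},\mathbf\Delta\mathbf q_{trial})$ are all correct.

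There is, however, one algebraic slip in the bookkeeping identity you flag as the place to check carefully. You assert $\sum_s\alpha^{(s)}Q_1^{(s)}=2(\mathbf q_{trial},\mathbf\Delta\mathbf q_{trial})$, but by \eqref{trialbig} and \eqref{algebra} one has
$$\sum_{s\in\mathcal S}\alpha^{(s)}Q_1^{(s)}=2\sum_{s\in\mathcal S}\mathcal P_s\big(\mathbf\Delta\mathbf q_{trial}\big)=2\big(\mathbf 1,\mathbf\Delta\mathbf q_{trial}\big),$$
and likewise $\sum_s\alpha^{(s)}Q_K^{(s)}=2(\mathbf 1,\mathbf\Delta\mathbf 1)$. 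With the corrected cross term the reassembly works exactly as you intend:
$$\tfrac12\sum_s\alpha^{(s)}(Q_K^{(s)}-Q_1^{(s)})-\tfrac12\big[(\mathbf 1,\mathbf\Delta\mathbf 1)-(\mathbf q_{trial},\mathbf\Delta\mathbf q_{trial})\big]=\tfrac12\big[(\mathbf 1,\mathbf\Delta\mathbf 1)-2(\mathbf 1,\mathbf\Delta\mathbf q_{trial})+(\mathbf q_{trial},\mathbf\Delta\mathbf q_{trial})\big]=\tfrac12\big((\mathbf 1-\mathbf q_{trial}),\mathbf\Delta(\mathbf 1-\mathbf q_{trial})\big),$$
whereas the identity as you wrote it would not yield the quadratic form. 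Since you already invoke the symmetry of $\mathbf\Delta$ and the cross term $(\mathbf 1,\mathbf\Delta\mathbf q_{trial})$ in the surrounding discussion, this reads as a typo rather than a conceptual gap; once fixed, the proof is complete and matches the paper's intent.
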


\noindent The optimization of (\ref{boundopti}) on $\mathbf{q}_{trial}$,  gives a system of $S$ coupled self consistent equations, i.e. $\forall p \in\mathcal{S}$
\be\label{selfa}
\sum_{s\in\mathcal{S}}\boldsymbol{\Delta}_{ps}\left[\int d\mu(z)\tanh^2\Big(\sqrt{\frac{2}{\alpha^{(s)}}\mathcal{P}_s(\boldsymbol{\Delta} \mathbf{q}_{trial})}z\Big)-q^{(s)}\right]=0,
\ee
This system admits a unique solution as soon as $\det(\boldsymbol{\Delta})\neq 0$, thus
whenever $\det(\boldsymbol{\Delta})>0$, $p_{RS}(\mathbf{q}_{trial})$ has a minimum in $\mathbf{q}_{trial}=\mathbf{\bar{q}}$ satisfying $\forall s\in\mathcal{S}$

\be\label{selfcon2a}
\bar{q}^{(s)}=\int d\mu(z)\tanh^2\Big(\sqrt{\frac{2}{\alpha^{(s)}}\mathcal{P}_s(\boldsymbol{\Delta} \mathbf{\bar{q}})}z\Big)=\left\langle \mathcal{P}_s (\mathbf{q}) \right\rangle_{t=0}
\ee
\noindent The last equalities can be easily checked thanks to the factorizability of the one-body problem at $t=0$.
In other words, the value of $\mathbf{q}_{trial}$ minimizing the overlap' s fluctuations of the original model (at $t=1$) is just the overlap's mean of the interpolating one-body trial at $t=0$.

Let show now how the replica symmetric bound violate the entropy positivity at low temperatures.
Mirroring the historical scenario for mono-partite spin-glasses, we can easily check that the replica symmetric expression for the pressure $(\ref{boundopti})$ is not the exact solution of the model in the low temperature region by studying the behavior of the entropy. We can define it as  the {\em non-negative} quantity
\be
s(\boldsymbol{\Delta})=\lim_{N\to\infty}s_N(\boldsymbol{\Delta})=-\frac 1 N \mathbb{E}\sum_{\sigma}G_N(\sigma,\boldsymbol{\Delta})
\log(G_N(\sigma,\boldsymbol{\Delta})),
\ee
where $G_N(\sigma,\boldsymbol{\Delta})=Z_N^{-1}(\boldsymbol{\Delta})e^{-H_N(\sigma,\boldsymbol{\Delta})}$ is the Boltzmann measure. Notice that, unlike before, we have write explicitly the dependance on the matrix $\boldsymbol{\Delta}$ . Since $s_N(\boldsymbol{\Delta})=p_N(\boldsymbol{\Delta})-\frac 1 N \langle H(\sigma)\rangle_N $,  we can write
\be
s(\boldsymbol{\Delta})=p(\boldsymbol{\Delta})- \frac d {d\lambda} p(\lambda \boldsymbol{\Delta})|_{\lambda=1}.
\ee
Now we can define $s_{RS}(\boldsymbol{\Delta})=p_{RS}-\frac d {d\lambda} p_{RS}(\lambda \boldsymbol{\Delta})|_{\lambda=1}$. We can easily show that if the amplitude of the covariances is large enough, $s_{RS}(\boldsymbol{\Delta})$ is strictly negative. Indeed, we have the following
\begin{prop}
In the regime of large covariances (low temperatures), the RS-entropy is strictly negative, i.e.
$$
\lim_{\beta\to+\infty}s_{RS}(\beta \boldsymbol{\Delta})<0,
$$
for any choice of $\boldsymbol{\Delta}$ with $(\det (\boldsymbol{\Delta})>0)$ and $\boldsymbol{\alpha}$, where $\beta\in\mathbb{R}^{+}$ plays the role of the inverse temperature.
\end{prop}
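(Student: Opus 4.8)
The plan is to reduce the claim to a short asymptotic analysis of a sum of one‑dimensional Gaussian integrals, mimicking the classical computation giving $s_{RS}^{SK}(T=0)=-\tfrac1{2\pi}$ for Sherrington--Kirkpatrick. First I would obtain a closed form for $s_{RS}$ at the minimiser $\bar{\mathbf q}$ of $p_{RS}$. Since $\bar{\mathbf q}$ is stationary for $p_{RS}$, only the explicit $\boldsymbol\Delta$–dependence survives the differentiation in the sum rule, and Gaussian integration by parts $\int z\,g(z)\,d\mu(z)=\int g'(z)\,d\mu(z)$ applied to the factors $\log\cosh(\sqrt{A_s}\,z+h^{(s)})$, $A_s:=\tfrac2{\alpha^{(s)}}\mathcal P_s(\boldsymbol\Delta\bar{\mathbf q})$, gives
\be
s_{RS}(\boldsymbol\Delta)=p_{RS}(\bar{\mathbf q})-\Big[(\mathbf 1,\boldsymbol\Delta\mathbf 1)-(\bar{\mathbf q},\boldsymbol\Delta\bar{\mathbf q})\Big]
=\log 2+\sum_{s\in\mathcal S}\alpha^{(s)}\!\!\int\! d\mu(z)\,\log\cosh\!\big(\sqrt{A_s}\,z+h^{(s)}\big)+\tfrac12\big((\mathbf 1-\bar{\mathbf q}),\boldsymbol\Delta(\mathbf 1-\bar{\mathbf q})\big)-(\mathbf 1,\boldsymbol\Delta\mathbf 1)+(\bar{\mathbf q},\boldsymbol\Delta\bar{\mathbf q}),
\ee
the bracket being $-\tfrac1N\langle H_N\rangle_N\ge0$ evaluated in the RS approximation (Gaussian integration by parts yields $\tfrac1N\langle H_N\rangle_N=\langle(\mathbf q_N,\boldsymbol\Delta\mathbf q_N)\rangle_N-(\mathbf 1,\boldsymbol\Delta\mathbf 1)$); for $S=1$ this is exactly the SK replica‑symmetric entropy. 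One then sets $\boldsymbol\Delta\to\beta\boldsymbol\Delta$ and writes $\boldsymbol\epsilon:=\mathbf 1-\bar{\mathbf q}(\beta)$, so $A_s=\tfrac{2\beta}{\alpha^{(s)}}\mathcal P_s(\boldsymbol\Delta(\mathbf 1-\boldsymbol\epsilon))$.

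The main step, and the one I expect to be the real obstacle, is to show the minimiser freezes: $\bar q^{(s)}(\beta)\to1$ for every $s$, equivalently $A_s(\beta)\to\infty$. This is where the multi‑species structure genuinely enters, in contrast with the trivial scalar SK case. One uses that: (i) by the self‑consistency $\bar q^{(s)}=\int d\mu(z)\tanh^2(\sqrt{A_s}z+h^{(s)})$ the component $\bar q^{(s)}$ is a monotone function of $A_s$ tending to $1$ iff $A_s\to\infty$; (ii) if along a subsequence some $\bar q^{(s)}$ stayed bounded away from $1$, then $A_s$ is bounded, hence $\mathcal P_s(\boldsymbol\Delta\bar{\mathbf q})=O(1/\beta)\to0$, and since the entries of $\boldsymbol\Delta$ are variances (non‑negative) this forces $\bar q^{(p)}\to0$ for every $p$ with $\boldsymbol\Delta_{sp}>0$; assuming $\boldsymbol\Delta$ irreducible (otherwise the model splits into independent blocks, treated separately) all $\bar q^{(p)}\to0$; (iii) but $\bar{\mathbf q}\equiv\mathbf 0$ cannot be the minimiser for large $\beta$, since $p_{RS}(\mathbf 0;\beta\boldsymbol\Delta)=\log 2+\sum_s\alpha^{(s)}\log\cosh h^{(s)}+\tfrac\beta2(\mathbf 1,\boldsymbol\Delta\mathbf 1)=\Theta(\beta)$ while $p_{RS}(\mathbf 1;\beta\boldsymbol\Delta)=\Theta(\sqrt\beta)$, so $\inf_{\mathbf q_{trial}}p_{RS}\le p_{RS}(\mathbf 1)<p_{RS}(\mathbf 0)$ — contradiction. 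Hence all $A_s\to\infty$; the rate then follows from the Laplace‑type estimate $1-\bar q^{(s)}=\int d\mu(z)\cosh^{-2}(\sqrt{A_s}z+h^{(s)})=\sqrt{\tfrac2{\pi A_s}}\,(1+O(A_s^{-1}))$, bootstrapped against $A_s=\tfrac{2\beta}{\alpha^{(s)}}\mathcal P_s(\boldsymbol\Delta(\mathbf 1-\boldsymbol\epsilon))$, giving $\epsilon^{(s)}=1-\bar q^{(s)}=\tfrac{c_s}{\sqrt\beta}(1+o(1))$ with $c_s:=\sqrt{\alpha^{(s)}/(\pi\,\mathcal P_s(\boldsymbol\Delta\mathbf 1))}$, all positive since $\boldsymbol\Delta$ has non‑negative entries and positive diagonal.

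It then remains to insert these expansions into the closed form and watch the divergences cancel. Using $\int d\mu(z)\log\cosh(\sqrt{A_s}z+h^{(s)})=\sqrt{2A_s/\pi}-\log 2+o(1)$, the $\Theta(\beta)$ part of the interaction terms collapses to $\tfrac{3\beta}{2}(\boldsymbol\epsilon,\boldsymbol\Delta\boldsymbol\epsilon)-2\beta(\mathbf 1,\boldsymbol\Delta\boldsymbol\epsilon)$ — the $\beta(\mathbf 1,\boldsymbol\Delta\mathbf 1)$ pieces cancel because $-\tfrac12-1+\tfrac32=0$ — and the $\Theta(\sqrt\beta)$ part of $-2\beta(\mathbf 1,\boldsymbol\Delta\boldsymbol\epsilon)$ cancels the $\Theta(\sqrt\beta)$ part of $\sqrt{2/\pi}\sum_s\alpha^{(s)}\sqrt{A_s}$, using $\sum_s\alpha^{(s)}=1$ and the identity $\sqrt{2/\pi}=2/\sqrt{2\pi}$; the $\log 2$'s cancel likewise. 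Pushing the expansion one order further (for which $\epsilon^{(s)}$ is needed to order $\beta^{-1}$, from the next term of the same Laplace estimate) the finite remainder is
\be
\lim_{\beta\to\infty}s_{RS}(\beta\boldsymbol\Delta)=-\frac1{2\pi}\,\big(\mathbf v,\boldsymbol\Delta\mathbf v\big),\qquad v^{(s)}:=\sqrt{\frac{\alpha^{(s)}}{\mathcal P_s(\boldsymbol\Delta\mathbf 1)}},
\ee
which for $S=1$ recovers the familiar $-\tfrac1{2\pi}$.

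Finally, since on the convex region $\boldsymbol\Delta$ is positive definite ($\det\boldsymbol\Delta>0$ together with positive semi‑definiteness) and $\mathbf v\neq\mathbf 0$, the right‑hand side is strictly negative, which is the assertion. The genuinely delicate point is the freezing statement of the second paragraph; the subsequent asymptotic bookkeeping, although somewhat lengthy, is routine, and the cancellation of the $\Theta(\beta)$ and $\Theta(\sqrt\beta)$ terms merely reflects the finiteness of the zero‑temperature pressure and ground‑state energy.
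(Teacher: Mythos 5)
Your proposal follows essentially the same route as the paper: write $s_{RS}$ in closed form at the stationary point $\bar{\mathbf q}$, use Gaussian integration by parts and the self-consistency equations so that only the explicit $\boldsymbol\Delta$-dependence survives the $\lambda$-derivative, invoke the freezing $\bar{\mathbf q}\to\mathbf 1$, and then use the Laplace estimate $1-\bar q^{(s)}\sim c_s\beta^{-1/2}$ with $c_s=\sqrt{\alpha^{(s)}/(\pi\,\mathcal P_s(\boldsymbol\Delta\mathbf 1))}$ to see the finite negative residue. Your final expression $-\tfrac1{2\pi}(\mathbf v,\boldsymbol\Delta\mathbf v)$ coincides (modulo the author's different $\beta$-parametrisation of the covariance scaling) with the paper's implicit answer $-\tfrac12\lim\beta^2\big((\mathbf 1-\bar{\mathbf q}),\boldsymbol\Delta(\mathbf 1-\bar{\mathbf q})\big)$, and strict negativity follows in both cases from $\boldsymbol\Delta$ being positive definite and $\mathbf v\neq\mathbf 0$.

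The genuine difference is in thoroughness, not in route. The paper simply asserts "in the limit $\beta\to+\infty$ the optimized order parameters $\bar{\mathbf q}\to\mathbf 1$" and then "explicating the derivative it is easy to see that" all divergences cancel; you instead supply a proof of the freezing (the subsequence/contradiction argument in your part (ii)--(iii), comparing $p_{RS}(\mathbf 0)=\Theta(\beta)$ with $p_{RS}(\mathbf 1)=\Theta(\sqrt\beta)$), and you make the cancellation of the $\Theta(\beta)$ and $\Theta(\sqrt\beta)$ pieces explicit. You correctly identify that your argument for freezing implicitly needs the diagonal of $\boldsymbol\Delta$ to be strictly positive (automatic here since $\boldsymbol\Delta$ is positive definite) and, if some off-diagonal $\boldsymbol\Delta_{sp}$ vanish, an irreducibility assumption or a reduction to independent blocks; the paper does not mention this, which is a minor gap on its side that your write-up repairs. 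Overall the proposal is correct and, if anything, more complete than the published proof.
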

\begin{proof} Using its definition
$$
s_{RS}(\beta \boldsymbol{\Delta})=p_{RS}(\beta\boldsymbol{\Delta},\bar{\mathbf{q}})-\frac{\partial}{\partial\lambda} p_{RS}(\lambda\beta \boldsymbol{\Delta},\bar{\mathbf{q}})|_{\lambda=1}.
$$
We note that, using  (\ref{selfcon2a}), in the limit $\beta\to +\infty$, the optimized order parameters $\bar{\mathbf{q}}\to \textbf{1}$. Explicating the derivative it is easy to see that
$$
\lim_{\beta\to+\infty}s_{RS}(\beta \boldsymbol{\Delta})=\lim_{\beta\to+\infty}-\frac{ \beta^2} 2\left((\boldsymbol{1}-\bar{\mathbf{q}}),\boldsymbol{\Delta} (\boldsymbol{1}-\bar{\mathbf{q}}) \right) \leq 0
$$
Finally we can state that the limit is strictly negative, using again (\ref{selfcon2a}) and noting that
\be
\begin{split}
\lim_{\beta\to+\infty}\beta (1- \bar{q}_{(s)})&=\lim_{\beta\to+\infty}\beta \int d\mu(z)\Big(1-\tanh^2\Big(\beta
\sqrt{\frac{2}{\alpha^{(s)}}\mathcal{P}_s\Big(\mathbf{\Delta}\mathbf{\bar{q}}\Big)}z\Big)\Big)\nonumber \\
&=\frac{\int d\mu(z) |z|}{\sqrt{\frac{2}{\alpha^{(s)}}\mathcal{P}_s\Big(\mathbf{\Delta}\mathbf{1}\Big)}}>0 .\nonumber
\end{split}
\ee
\end{proof}
The existence of a negative RS-entropy regime is a clear signal that the model is not always replica symmetric (certainly it is RS inside the annealed region defined in Theorem $\ref{annealed region}$) but there exists a region in which the pressure $p(\boldsymbol{\Delta})$ is strictly lower than its RS bound $p_{RS}(\boldsymbol{\Delta})$.

\section{The Broken Replica Symmetry bound}

In this section we give a proof of Theorem \ref{thrsb}.
It is enough to show that (\ref{sumrulersb}) holds, then we have straightforward  conclusions. The strategy is to apply the RSB interpolation scheme introduced in Appendix \ref{notation}.\\
We define the interpolating Hamiltonian as
\be\label{hamiintersb}
H_{N}(\sigma,t):=\sqrt{t}H_{N}(\sigma)+\sqrt{1-t}\sum^K_{l=1}H^l_{N}(\sigma,\mathbf{q}_{l}),
\ee
with
\be\label{hamitrialrsb}
H^l_{N}(\sigma,\mathbf{q}_{l}):=\sum_{s\in\mathcal{S}}H^{l,(s)}_{N}(\sigma^{(s)},\mathbf{q}_l),
\ee
where $H_{N}(\sigma)$ is the original Hamiltonian and, for each $l$, $H^{l,(s)}_{N}(\sigma^{(s)},\mathbf{q}_{l})$ are two independent one-body interaction Hamiltonian, defined as
\be\label{hami2rsb}
H^{l,(s)}_{N}(\sigma^{(s)},\mathbf{q}_{l}):=-\sqrt{2} \sqrt{\mathcal{P}_s\Big(\mathbf{\Delta}(\mathbf{q}_{l}-\mathbf{q}_{l-1})\Big)}\frac{1}{\sqrt{\alpha^{(s)}}}\sum_{i\in\Lambda^{(s)}_N} J^{l,(s)}_{i}\sigma^{(s)}_i
\ee
where the $J$'s are \textsl{Gaussian i.i.d.} r.v., independent of the other r.v., such that for every $l,s$ and $i$ we have that

\be\label{meanRSB}
\mathbb{E}(J^{l,(s)}_{i})=0
\ee
and
\be\label{variaRSB}
\mathbb{E}(J^{l,(s)}_{i}J^{l',(s')}_{i'})=\delta_{ll'}\delta_{ss'}\delta_{ii'}.
\ee
After simple computations, we get
$$\mathbb{E}\Big(H^{l,(s)}_{N}(\sigma^{(s)},\mathbf{q}_{l})H^{l',(s')}_{N}(\tau^{(s')},\mathbf{q}_{l})\Big)=\delta_{ll'}\delta_{ss'}
2N\mathcal{P}_s\Big(\mathbf{\Delta}(\mathbf{q}_{l}-\mathbf{q}_{l-1})\Big)\mathcal{P}_s\Big(\mathbf{q}_N\Big)$$
and then by (\ref{algebra}) the covariance matrix of the trial Hamiltonian becomes
$$\mathbb{E}\Big(H^l_{N}(\sigma,\mathbf{q}_{l})H^{l'}_{N}(\tau,\mathbf{q}_{l'})\Big)=
\delta_{ll'}2N\Big((\mathbf{q}_{l}-\mathbf{q}_{l-1}),\mathbf{\Delta}\mathbf{q}_{N}\Big).$$

Keeping in mind Proposition \ref{interrsb2}, we introduce the RSB interpolation scheme with the following identifications:
$$i\rightarrow\sigma, \ \ a_i\rightarrow a_N(\sigma,\mathbf{h}), \ \  U_i\rightarrow H_{N}(\sigma), \  \
B_i^{l}\rightarrow H^{l}_{N}(\sigma,\mathbf{q}_{l})$$
and we define the interpolating pressure as
\be\label{qpresinter}
p_N(t):=\frac{1}{N}\mathbb{E}\log Z_{0,N}(t).
\ee
\\
It is easy to check that the boundary values of $p_N (t)$ are
\begin{eqnarray}\label{boundary1}
p_N(1)&=&p_N,\\ \label{boundary0}
p_N(0)&=&\log2+\sum_{s\in\mathcal{S}}\alpha^{(s)}f^{(s)}(0,h^{(s)}),
\end{eqnarray}
where $f^{(s)}(u^{(s)},h^{(s)})$ is the solution of the Parisi's PDE \eqref{antipara}.
\\
In order to apply the interpolation argument we have to compute the $t$-derivative of the interpolating pressure. A simple application of Proposition \ref{interrsb2} leads to the following

\be\label{derivativersb}
\frac{\partial}{\partial t}p_N(t)=-\frac{1}{2}\Big(\mathbf{1},\mathbf{\Delta}\mathbf{1}\Big)-
\frac{1}{2}\sum^K_{l=0}(m_{l+1}-m_l)\Big\langle\Big(\mathbf{q}_N\mathbf,{\Delta}\mathbf{q}_{N}\Big)-2\Big(\mathbf{q}_N\mathbf,{\Delta}\mathbf{q}_{l}\Big)\Big\rangle_{N,l,t}.
\ee
To complete the proof of the Theorem, we need the following equivalence
\begin{prop}\label{intrepre}
The following representation holds
$$-\frac{1}{2}\Big(\mathbf{1},\mathbf{\Delta}\mathbf{1}\Big)+
\frac{1}{2}\sum^K_{l=0}(m_{l+1}-m_l)\Big(\mathbf{q}_l,\mathbf{\Delta}\mathbf{q}_{l}\Big)=
-\frac{1}{2}\int_{\widetilde{\Gamma}} d\mathbf{u}\ \ x(\mathbf{u})\ \ \nabla_{\mathbf{u}}\Big(\mathbf{u},\mathbf{\Delta}\mathbf{u}\Big)\cdot d\mathbf{u}.
$$
\end{prop}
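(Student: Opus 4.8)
The plan is to read the line integral as an integral of a piecewise-constant function against an exact differential, collapse it to a finite sum over the vertices $\mathbf{q}_0,\dots,\mathbf{q}_K$, and then rearrange by summation by parts.

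First I would note that, writing $g(\mathbf{u}):=\bigl(\mathbf{u},\mathbf{\Delta}\mathbf{u}\bigr)$, one has $\nabla_{\mathbf{u}}g(\mathbf{u})\cdot d\mathbf{u}=dg(\mathbf{u})$ along any curve, so the right-hand side of the claimed identity equals $-\tfrac{1}{2}\int_{\widetilde{\Gamma}}x(\mathbf{u})\,dg(\mathbf{u})$. Since $\widetilde{\Gamma}$ may be taken non-decreasing in each coordinate (the natural class of paths, on which the line integral is path-independent), runs from $\mathbf{q}_0=\mathbf{0}$ to $\mathbf{q}_K=\mathbf{1}$, and contains every $\mathbf{q}_l$, I would split it into the consecutive arcs $\widetilde{\Gamma}_l$ joining $\mathbf{q}_l$ to $\mathbf{q}_{l+1}$, $l=0,\dots,K-1$ (a degenerate arc with $\mathbf{q}_l=\mathbf{q}_{l+1}$ is just a point and contributes nothing).

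The crux is the evaluation of $x$ along a single arc. On the interior of $\widetilde{\Gamma}_l$ every coordinate satisfies $q_l^{(s)}\le u^{(s)}\le q_{l+1}^{(s)}$, and, whenever $\mathbf{q}_{l+1}\neq\mathbf{q}_l$, at least one of them is strictly below $q_{l+1}^{(s)}$; hence $\prod_{s\in\mathcal{S}}\theta(u^{(s)}-q_{l'}^{(s)})$ equals $1$ for every $l'\le l$ and vanishes for every $l'\ge l+1$, so that
\[
x(\mathbf{u})=\sum_{l'=0}^{l}(m_{l'+1}-m_{l'})=m_{l+1}
\]
on the interior of $\widetilde{\Gamma}_l$. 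The finitely many vertex points where this may fail carry no weight in $\int x\,dg$ since $g$ is continuous, so
\[
\int_{\widetilde{\Gamma}}x(\mathbf{u})\,dg(\mathbf{u})=\sum_{l=0}^{K-1}m_{l+1}\bigl(g(\mathbf{q}_{l+1})-g(\mathbf{q}_l)\bigr).
\]

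Finally I would rearrange this sum: setting $g_l:=\bigl(\mathbf{q}_l,\mathbf{\Delta}\mathbf{q}_l\bigr)$ and using $m_0=0$, $m_{K+1}=1$, $g_0=\bigl(\mathbf{0},\mathbf{\Delta}\mathbf{0}\bigr)=0$ and $g_K=\bigl(\mathbf{1},\mathbf{\Delta}\mathbf{1}\bigr)$, an Abel summation yields
\[
\sum_{l=0}^{K-1}m_{l+1}(g_{l+1}-g_l)=g_K-\sum_{l=0}^{K}(m_{l+1}-m_l)\,g_l,
\]
and multiplying by $-\tfrac{1}{2}$ produces exactly the asserted representation. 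I expect the only delicate point to be the bookkeeping behind the identification $x\equiv m_{l+1}$ on $\widetilde{\Gamma}_l$, together with the harmless behaviour on the vertex set and on degenerate arcs; the rest is a routine telescoping computation.
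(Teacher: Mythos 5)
Your proof is correct and reaches the paper's identity, but the decomposition you use is genuinely dual to the one in the paper. The paper substitutes the sum defining $x(\mathbf{u})$ into the integral and exchanges sum with integral, so that the $l$-th term becomes $(m_{l+1}-m_l)\int_{\Gamma_l}\nabla g\cdot d\mathbf{u}$ where $\Gamma_l$ is the \emph{overlapping} tail of $\widetilde{\Gamma}$ from $\mathbf{q}_l$ to $\mathbf{1}$; the gradient theorem then gives $(m_{l+1}-m_l)\bigl(g(\mathbf{1})-g(\mathbf{q}_l)\bigr)$, and summing using $\sum_l(m_{l+1}-m_l)=1$ yields the statement directly. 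You instead cut $\widetilde{\Gamma}$ into the \emph{disjoint} consecutive arcs $\widetilde{\Gamma}_l$ joining $\mathbf{q}_l$ to $\mathbf{q}_{l+1}$, identify $x\equiv m_{l+1}$ there, get $\sum_l m_{l+1}(g_{l+1}-g_l)$, and recover the result by Abel summation. The two computations are the two standard ways to telescope the same sum and cost the same; your version has the small advantage of making explicit exactly where the value $m_{l+1}$ comes from and of flagging the degenerate-arc and vertex-set issues, and it also makes visible the implicit assumption -- needed in both arguments but unstated in the paper -- that $\widetilde{\Gamma}$ should be coordinatewise non-decreasing (otherwise the set $\{u^{(s)}\ge q_l^{(s)}\ \forall s\}\cap\widetilde{\Gamma}$ need not be a single arc based at $\mathbf{q}_l$ and the gradient theorem would not give $g(\mathbf{1})-g(\mathbf{q}_l)$).
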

\begin{proof} We can use the explicit definition of $x(\mathbf{u})$ given in (\ref{orderpara}) to check that

$$-\frac{1}{2}\int_{\widetilde{\Gamma}} \ \ x(\mathbf{u})\ \ \nabla_{\mathbf{u}}\Big(\mathbf{u},\mathbf{\Delta}\mathbf{u}\Big)\cdot d\mathbf{u}
=-\frac{1}{2}\sum^{K}_{l=0}(m_{l+1}-m_l)\int_{\Gamma_l}\ \ \nabla_{\mathbf{u}}\Big(\mathbf{u},\mathbf{\Delta}\mathbf{u}\Big)\cdot d\mathbf{u}
$$
where ${\Gamma_l}$ is the result of the action of the $\theta$'s
on the path ${\widetilde{\Gamma}}$, that is his component  between the points $\mathbf{q}_l $ and $\mathbf{1}$. By the Gradient's Theorem, the integral is path independent and is equal to the increment of the potential function, that is the desired result.
\end{proof}
Finally, combining (\ref{boundary1}), (\ref{boundary0}), (\ref{derivativersb}) and Proposition \ref{intrepre}, the proof of (\ref{sumrulersb}) is a simple application of the fundamental theorem of calculus.

\appendix

\section{Notation and technical tools}\label{notation}

Much of the recent progresses in the study of mean field spin glass models is based on methods and arguments
introduced by Guerra in a series of works (see e.g. \cite{broken,guerra2,GT2,limterm,dibiasio}), constituting  the so called {\em interpolation method}.
Beyond the original works, the interested reader can found a detailed and complete exposition with several applications of this method in \cite{talbook}, while in \cite{barra1} these techniques are shown at work on the simpler Curie-Weiss model.
In order to present a self-consistent exposition, hereafter we outline briefly the basic ideas.

Let $N$ be an integer, and for $i\in I=\{1, . . . ,N\}$,
let $U_i$ and $\widetilde{U}_i$ be two families of centered
Gaussian random variables, independent each other, uniquely determined by the respective covariance matrices
$\mathbb{E}(U_iU_j) = C_{ij}$ and $\mathbb{E}(\widetilde{U}_i\widetilde{U}_j) = \widetilde{C}_{ij}$.  We treat the set of indices $i$  as configuration space for
some statistical mechanics system. We define the Hamiltonian interpolating function as the following random variable
$$H_i(t):= \sqrt{t}U_i+\sqrt{1-t}\tilde{U}_i,$$
where $t\in[0,1]$ is the real parameter used for interpolation. 
%
%

Let us introduce  the multi-species framework. Suppose that the system is composed by a finite number $S$ of species indexed by $s\in\mathcal{S}$, then  $|\mathcal{S}|=S$. We assume that:\smallskip

- the configuration space is decomposed in a disjoint union:
 $$I=\bigcup_{s\in\mathcal{S}}I^{(s)},$$\smallskip

- the $U$'s are also decomposed in the following way:

\be\label{decompu}
U_i=\sum_{s,p \in \mathcal{S}}U_i^{(sp)},
\ee
\noindent
where $U_i^{(sp)}$ is a family of gaussian r.v. such that the covariance matrix is of the form
\be
\label{covaspecies}
\mathbb{E}(U_i^{(sp)}U_j^{(s'p')})=\Delta^2_{sp}\delta_{ss'}\delta_{pp'}C_{ij}^{(s)}C_{ij}^{(p)}
\ee
where $C_{ij}^{(s)}$ is a covariance matrix defined on $I^{(s)}\times I^{(s)}$.

Notice that the covariance matrix defined in (\ref{covaspecies}) is the Schur-Hadamard product of the $C_{ij}^{(s)}$ and then is positive definite. The family of positive parameters $(\Delta^2_{sp})_{s,p\in\mathcal{S}}$ tunes the interactions between the various species.\\
For a fixed couple $(i,j)$ we can think at each $C_{ij}^{(s)}$ as a component of a vector in the space $\mathbb{R}^{S}$ and then, thanks to (\ref{decompu}) and (\ref{covaspecies}), the covariance matrix of the entire system can be rewritten, with a slightly abuse of notation, as a quadratic form in $\mathbb{R}^{S}$, namely as
\be
\label{covasystemU}
C_{ij}=\mathbb{E}(U_i U_j)=\sum_{s,p \in \mathcal{S}}C_{ij}^{(s)}\Delta^2_{sp}C_{ij}^{(p)}=\Big(\mathbf{C},\mathbf{\Delta}\mathbf{C}\Big),
\ee
where $\mathbf{C}:=(C_{ij}^{(s)})_{s\in\mathcal{S}}$ is a vector in $\mathbb{R}^{S}$ and
$\mathbf{\Delta}$ is the real symmetric matrix defined by the entries
$$\mathbf{\Delta}:=(\Delta^2_{sp})_{s,p \in \mathcal{S}}.$$
Suppose for simplicity that $C_{ii}^{(s)}=\sqrt{c} $ for some $c\in \mathbb{R}^+$ for each $i\in I,s\in\mathcal{S}$, that is
\be\label{covadiagU}
C_{ii}=c(\mathbf{1},\mathbf{\Delta}\mathbf{1}),
\ee
where
$$\mathbf{1}:=(1)_{s\in\mathcal{S}}.$$
We make the same assumptions for the $\widetilde{U}$'s.
An application of integration by part formula gives the following
\begin{prop}\label{gauinter}
Consider the functional
\be\label{funinter0}
\varphi(t):=\mathbb{E}\log Z(t)
\ee
then for its $t$-derivative the following holds
\be\label{inter1}
\varphi'(t)=\frac{1}{2}(c-\widetilde{c}) (\mathbf{1},\mathbf{\Delta}\mathbf{1}) -\frac{1}{2}{\langle (\mathbf{C},\mathbf{\Delta}\mathbf{C})-
(\mathbf{\widetilde{C}},\mathbf{\Delta}\mathbf{\widetilde{C}})\rangle}_t.
\ee
\end{prop}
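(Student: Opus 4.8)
The plan is to differentiate $\varphi(t)=\mathbb{E}\log Z(t)$ in $t$ and then collapse the resulting expectation to a two-replica Gibbs average by Gaussian integration by parts (Wick's formula). By the chain rule, writing $G_i(t):=a_ie^{-H_i(t)}/Z(t)$ for the random Gibbs weights,
$$\varphi'(t)=-\mathbb{E}\Big[\sum_{i\in I}\frac{\partial H_i(t)}{\partial t}\,G_i(t)\Big],\qquad \frac{\partial H_i(t)}{\partial t}=\frac{U_i}{2\sqrt t}-\frac{\widetilde U_i}{2\sqrt{1-t}},$$
which splits $\varphi'(t)$ into a $U$-piece and a structurally identical $\widetilde U$-piece with opposite sign (differentiation under the expectation is harmless since $I$ is finite and the Gaussians have all moments).

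Next I would treat the $U$-piece $-\frac{1}{2\sqrt t}\sum_i\mathbb{E}[U_iG_i(t)]$. Although the $U_i$ are assembled from the family $\{U_i^{(sp)}\}$, everything in sight depends on them only through the linear combinations $U_i=\sum_{s,p}U_i^{(sp)}$, whose covariance is $C_{ij}=(\mathbf C,\mathbf\Delta\mathbf C)$ by \eqref{covasystemU}; hence Wick's formula may be applied directly to the Gaussian family $\{U_i\}$. A direct computation gives $\partial G_i/\partial U_j=-\sqrt t\,\delta_{ij}G_i+\sqrt t\,G_iG_j$, so
$$\sum_i\mathbb{E}[U_iG_i]=-\sqrt t\sum_i C_{ii}\,\mathbb{E}[G_i]+\sqrt t\sum_{i,j}C_{ij}\,\mathbb{E}[G_iG_j].$$
In the first sum I would insert $C_{ii}=c(\mathbf 1,\mathbf\Delta\mathbf 1)$ from \eqref{covadiagU} (independent of $i$) together with $\sum_iG_i=1$, obtaining $c(\mathbf 1,\mathbf\Delta\mathbf 1)$; the second sum is, by definition of the two-replica Gibbs state (treating $i,j$ as two independent copies and $\mathbf C$ as the associated "overlap" vector), exactly $\langle(\mathbf C,\mathbf\Delta\mathbf C)\rangle_t$. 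The prefactor $\tfrac1{2\sqrt t}$ cancels the $\sqrt t$'s, leaving $\tfrac12 c(\mathbf 1,\mathbf\Delta\mathbf 1)-\tfrac12\langle(\mathbf C,\mathbf\Delta\mathbf C)\rangle_t$.

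The $\widetilde U$-piece is handled verbatim with $c\to\widetilde c$, $\mathbf C\to\widetilde{\mathbf C}$, $t\to 1-t$, and an overall opposite sign from $\partial H_i/\partial t$; summing the two contributions gives \eqref{inter1}. This is essentially a bookkeeping exercise, and the only points needing care are: (i) keeping apart the two distinct "self" contributions — the Kronecker $\delta_{ij}$ produced by Wick's formula, which via \eqref{covadiagU} generates the $(c-\widetilde c)(\mathbf 1,\mathbf\Delta\mathbf 1)$ term, versus the diagonal $i=j$ term living inside the double sum $\sum_{i,j}C_{ij}G_iG_j$, which remains part of the Gibbs average; and (ii) checking that the factors $\tfrac1{2\sqrt t}$ and $\tfrac1{2\sqrt{1-t}}$ precisely absorb the $\sqrt t$ and $\sqrt{1-t}$ coming out of both the $t$-differentiation and the $\partial G_i/\partial U_j$ (resp. $\partial G_i/\partial\widetilde U_j$) computation, so that no $t$-dependent prefactor survives.
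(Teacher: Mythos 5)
Your proof is correct and is precisely the Gaussian integration-by-parts (Wick) calculation the paper invokes — the paper itself offers no details beyond the remark "An application of integration by part formula gives the following," so your blind reconstruction fills exactly that gap along the standard lines. The careful separation of the Wick $\delta_{ij}$ diagonal (which yields the $(c-\widetilde c)(\mathbf 1,\mathbf\Delta\mathbf 1)$ boundary term) from the two-replica Gibbs average, and the verification that the $\sqrt t$, $\sqrt{1-t}$ factors cancel, are both right.
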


where $\langle \ \ \rangle_t$ is the quenched measure (\ref{qeqstate}) induced by $H_i(t)$.
\\
In order to separate the contribution of the various species, let us introduce the operator $\mathcal{P}_s$ as the canonical projector in $\mathbb{R}^{S}$.\\
For any $s\in\mathcal{S}$ and for any vector $\mathbf{u}=(u^{(s)})_{s\in\mathcal{S}}$ in $\mathbb{R}^{S}$, we have that
\be
\label{projection}
\mathcal{P}_s\Big(\mathbf{u}\Big):=u^{(s)}.
\ee
Clearly, for two vectors $\mathbf{u},\mathbf{v}$, the following relation holds
\be\label{algebra}
\Big(\mathbf{u},\mathbf{\Delta}\mathbf{v}\Big)=\sum_{s\in\mathcal{S}}\mathcal{P}_s\Big(\mathbf{u}\Big)\mathcal{P}_s\Big(\mathbf{\Delta}\mathbf{v}\Big)=
\sum_{s\in\mathcal{S}}\mathcal{P}_s\Big(\mathbf{\Delta}\mathbf{u}\Big)\mathcal{P}_s\Big(\mathbf{v}\Big).
\ee
We discuss now briefly the RSB Guerra's interpolation for multipartite systems in the same setting of the original work \cite{broken}.
\\

Let $K$ be an integer and consider an arbitrary sequence of points $\Gamma:=(\mathbf{q}_l)_{l=1,\ldots, K}\in [0,1]^{S}$. For each triple $(l, i, s)$ with $l = 1, 2, . . . ,K$, $ i\in I$, $s\in\mathcal{S}$, let us introduce the
family of centered Gaussian random variables $B^{l,(s)}_i$ independent from the $U_i$
and uniquely defined through the covariances
\be
\label{covaRSB}
\mathbb{E}(B^{l,(s)}_iB^{l',(s')}_j)=
\delta_{ss'}\delta_{ll'}\mathcal{P}_s\Big(\mathbf{\Delta}\mathbf{u}_{l}{(\Gamma)}\Big)\mathcal{P}_s\Big(\mathbf{\widetilde{C}}_{l}\Big)
\ee
where, for each value of $l$, the component of the vector  $\mathbf{\widetilde{C}}_{l}=(\widetilde{C}^{(s)}_{l,ij})_{s\in\mathcal{S}}$,
are covariance matrix defined on $I^{(s)}\times I^{(s)}$ and $\mathbf{u}_{l}{(\Gamma)}$ is an arbitrary  vector in $\mathbb{R}^{S}$  which depends on the choice of the sequence $\Gamma$ .
\\
Notice that (\ref{covaRSB}) implies independence between  two different $l^{(s)},l'^{(s)}$ levels of symmetry breaking of each $s$-species.
For each $l = 1, 2, . . . ,K$ and $ i\in I$, we can define the following family of random variables
$$ B^{l}_i:=\sum_{s\in\mathcal{S}}B^{l,(s)}_i$$
then by (\ref{algebra}) we have that
\be
\label{familycova}
\mathbb{E}(B^{l}_iB^{l'}_j)=\delta_{ll'}\Big(\mathbf{u}_{l}{(\Gamma)},\mathbf{\Delta}\mathbf{\widetilde{C}}_{l}\Big).
\ee
Suppose for simplicity that $\widetilde{C}_{l,ii}^{(s)}=1$ for each $l,i,s$, that is
$$\mathbb{E}(B^{l}_iB^{l'}_i)=\delta_{ll'}\Big(\mathbf{u}_{l}{(\Gamma)},\mathbf{\Delta}\mathbf{1}\Big).$$
Let us introduce the following notations for the average with respect to $B^l_i,U_i$,
\begin{eqnarray}
\mathbb{E}_l(\cdot)&=&\int \prod_{i}d\mu(B^l_i)(\cdot)\ \ \ \ \ \ \forall l=1,...,K, \\
d\mu(B^l_i)&=&\prod_{s\in\mathcal{A}}d\mu(B^{l,(s)}_i), \\
\mathbb{E}_0(\cdot)&=&\int \prod_{i}d\mu(U_i)(\cdot), \\
\mathbb{E}(\cdot)&=&\mathbb{E}_0\mathbb{E}_1...\mathbb{E}_K(\cdot).
\end{eqnarray}
Consider a non-decreasing sequence of non negative real numbers $(m_0, m_1, ..., m_K, m_{K+1})$ with $m_0=0, \ m_{K+1}=1$ and define recursively the following the random variables
\begin{eqnarray}
Z_K(t)&:=&\sum_{i}\omega_i\exp{(\sqrt{t}U_i+\sqrt{1-t}\sum^K_{l=1}B^l_i)}, \\
Z^{m_l}_{l-1}&:=&\mathbb{E}_l(Z^{m_l}_{l})
\end{eqnarray}
\begin{prop}\label{interrsb2}
Consider the functional
\be
\label{funinterrsb}
\varphi(t)=\mathbb{E}_0\log(Z_0(t)),
\ee
then for its $t$-derivative the following relation holds
\be \label{interrsb}
\varphi'(t)=\frac{1}{2} (\mathbf{1},\mathbf{\Delta}\mathbf{1})-\sum^K_{l=1}\Big(\mathbf{u}_{l}(\Gamma),\mathbf{\Delta}\mathbf{1}\Big)
-\frac{1}{2}\sum^K_{l=0}(m_{l+1}-m_l){\langle  \Big(\mathbf{C},\mathbf{\Delta}\mathbf{C}\Big)-\widehat{B}^l\rangle }_{l,t}\ee
where $\widehat{B}^0:=0,\ \widehat{B}^l:=\sum^l_{l'=1}\Big(\mathbf{u}_{l}{(\Gamma)},\mathbf{\Delta}\mathbf{\widetilde{C}}_{l'}\Big)$ and ${\langle \ \ \rangle}_{l,t}$ is a suitable deformed quenched measure.
\end{prop}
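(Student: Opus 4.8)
This is the multi-species counterpart of Guerra's broken-replica derivative identity \cite{broken}: the one structural novelty is that every scalar covariance gets replaced by the quadratic form $(\,\cdot\,,\mathbf{\Delta}\,\cdot\,)$ built on the vector-valued covariances (\ref{covasystemU}) and (\ref{familycova}). The plan is threefold: (i) peel off the $K$ recursive expectations so that $\varphi'(t)$ reduces to a single random Gibbs average; (ii) apply Gaussian integration by parts as in Proposition \ref{gauinter}, but now also tracking the dependence of the nested weights on the Gaussian fields; (iii) reassemble the resulting scalar expressions into vector form through the projector identity (\ref{algebra}).

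\textbf{Peeling the recursion.} First I would observe that $Z_{l-1}^{m_l}=\mathbb{E}_l(Z_l^{m_l})$ gives $\partial_t\log Z_{l-1}=\mathbb{E}_l\!\big(W_l\,\partial_t\log Z_l\big)$ for each $l$, where $W_l:=Z_l^{m_l}/Z_{l-1}^{m_l}$ satisfies $\mathbb{E}_l W_l=1$. Iterating from $l=1$ up to $K$ gives $\varphi'(t)=\mathbb{E}\!\big(W_1\cdots W_K\,\Omega(\partial_t H_i(t))\big)$, with $\Omega$ the random Gibbs average attached to $Z_K(t)$, $\mathbb{E}=\mathbb{E}_0\cdots\mathbb{E}_K$, and $\partial_t H_i(t)=\tfrac1{2\sqrt t}U_i-\tfrac1{2\sqrt{1-t}}\sum_{l=1}^K B_i^l$. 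It is then natural to set, for $0\le l\le K$, $\langle\,\cdot\,\rangle_{l,t}:=\mathbb{E}\big(W_1\cdots W_l\,\Omega(\,\cdot\,)\big)$ on functions of the replicas, the first $l$ cavity fields being shared among replicas and the higher ones independent; these are exactly the deformed measures in the statement.

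\textbf{Integration by parts and reassembly.} Next I would run Wick's formula on each $\mathbb{E}\!\big(W_1\cdots W_K\,\Omega(U_i)\big)$ and $\mathbb{E}\!\big(W_1\cdots W_K\,\Omega(B_i^l)\big)$. The field $U_i$ is correlated (a) with the $U_j$ in the exponent of $\Omega$, which yields the diagonal term $C_{ii}=c(\mathbf1,\mathbf\Delta\mathbf1)$ via (\ref{covadiagU}) and a two-replica off-diagonal term governed by $C_{i_1i_2}=(\mathbf C,\mathbf\Delta\mathbf C)$, and (b) with each nested $Z_j$ appearing in $W_1,\dots,W_K$, through $\partial_{U_i}\log Z_j=\sqrt t\,(\text{level-}j\text{ Gibbs average})$. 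The prefactor $\tfrac1{2\sqrt t}$ cancels the $\sqrt t$ and, symmetrically, $\tfrac1{2\sqrt{1-t}}$ cancels the $\sqrt{1-t}$ produced by the $B$-fields; the $B$-diagonal contributions assemble, using (\ref{covaRSB}) and (\ref{algebra}), into $-\sum_{l=1}^K(\mathbf u_l(\Gamma),\mathbf\Delta\mathbf1)$. Channel (b) at weight $W_j$ comes with a factor $m_j$, and a discrete summation by parts in $j$ converts $\sum_j m_j(\cdots)$ into $\sum_{l=0}^K(m_{l+1}-m_l)(\cdots)$, the $l$-th summand being the two-replica quantity averaged in $\langle\,\cdot\,\rangle_{l,t}$. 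Collecting the $U$-part and the $B$-part of these surviving two-replica terms and using $\mathbb{E}(B_i^{l'}B_j^{l''})=\delta_{l'l''}(\mathbf u_{l'},\mathbf\Delta\widetilde{\mathbf C}_{l'})$, so that at level $l$ only $l'\le l$ contribute, the off-diagonal block is precisely $(\mathbf C,\mathbf\Delta\mathbf C)-\widehat B^l$; adding the three pieces produces (\ref{interrsb}).

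\textbf{Main obstacle.} The hard part will be step (iii): pushing the integration by parts cleanly through the tower of nested weights $W_1,\dots,W_K$, checking that the $m_j$-factors telescope into the $(m_{l+1}-m_l)$ pattern, and pairing each increment with the right two-replica state $\langle\,\cdot\,\rangle_{l,t}$ — in particular that $\widehat B^l$ is truncated at $l'\le l$ in accordance with (\ref{familycova}) and with the independence of the levels above $l$ under $\langle\,\cdot\,\rangle_{l,t}$. Everything else is the scalar computation of \cite{broken} under the substitutions $C_{ij}\to(\mathbf C,\mathbf\Delta\mathbf C)$ and $\mathbb{E}(B_i^lB_j^l)\to(\mathbf u_l,\mathbf\Delta\widetilde{\mathbf C}_l)$, so once that substitution is made explicit the identity could be quoted directly.
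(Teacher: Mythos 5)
The paper does not actually supply a proof of Proposition~\ref{interrsb2}: it is stated in Appendix~\ref{notation} with a pointer to Guerra's original computation in \cite{broken}, and the reader is expected to transport that argument through the substitutions $C_{ij}\mapsto(\mathbf{C},\mathbf{\Delta}\mathbf{C})$ and $\mathbb{E}(B_i^lB_j^l)\mapsto(\mathbf{u}_l,\mathbf{\Delta}\widetilde{\mathbf{C}}_l)$. Measured against that implicit intent, your outline hits the right notes: the three-step plan (peel the recursion via $\partial_t\log Z_{l-1}=\mathbb{E}_l(W_l\,\partial_t\log Z_l)$, integrate by parts the $U$- and $B$-fields, reassemble the scalars into quadratic forms through (\ref{algebra})) is exactly how the Guerra sum rule is obtained, and you correctly flag that the delicate bookkeeping is the telescoping of the $m_j$-factors and the truncation of the $B$-covariance at $l'\le l$.

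Two points you should tighten before this could count as a proof rather than a program. First, the proposed shorthand $\langle\,\cdot\,\rangle_{l,t}:=\mathbb{E}(W_1\cdots W_l\,\Omega(\,\cdot\,))$ is not the deformed two-replica state of Guerra's scheme. Your verbal description (levels $\le l$ shared between replicas, levels $>l$ independent) is right, but the formula is not: the higher cascade weights $W_{l+1},\ldots,W_K$ do not disappear, they act \emph{independently on each replica} before the common factor $W_1\cdots W_l$ is applied. Writing only $\mathbb{E}(W_1\cdots W_l\,\Omega)$ silently removes them, and with that definition the $m_j$-weighted terms you need from differentiating the nested $Z_j$'s would not appear with the stated $(m_{l+1}-m_l)$ coefficients. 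Second, the claim that the $B$-diagonal contribution assembles to $-\sum_{l}(\mathbf{u}_l,\mathbf{\Delta}\mathbf{1})$ with coefficient $-1$ rather than $-\tfrac12$ is true but not explained by your text: the $\tfrac1{2\sqrt{1-t}}$ prefactor times the $\sqrt{1-t}$ from the direct hit on $\Omega$ only gives $-\tfrac12\sum_l(\mathbf{u}_l,\mathbf{\Delta}\mathbf{1})$; the matching $-\tfrac12\sum_l(\mathbf{u}_l,\mathbf{\Delta}\mathbf{1})$ comes from the diagonal pieces of the hits on $W_{l},\ldots,W_K$ (the one-replica parts of $\partial_{B}\log Z_j$), and you should say so explicitly since that is precisely the channel you call ``the hard part.'' With those two corrections the sketch is a faithful multi-species transcription of the computation in \cite{broken}, which is what the paper itself relies on.
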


We notice that the previous proposition can be restated in the language of Ruelle Probability Cascades
\cite{PanchenkoTal}.

\section*{Acknowledgements}

Authors are grateful to MiUR trough the FIRB grants number
$RBFR08EKEV$ and $RBFR10N90W$ and PRIN grant number $2010HXAW77$ and to Sapienza University of Rome and Alma Mater Studiorum, Bologna University. AB is partially funded by GNFM (Gruppo Nazionale per la Fisica
Matematica) which is also acknowledged.



\end{document}